\documentclass{article}

\usepackage{microtype}
\usepackage{graphicx}
\usepackage{subfigure}
\usepackage{booktabs} % for professional tables
\usepackage{colortbl} % for table cell highlights
\usepackage{float} % for [H] (force float placement)

\newcommand{\valpha}{\boldsymbol{\alpha}}

\definecolor{bestcell}{rgb}{0.93,0.97,0.93}
\newcommand{\best}[1]{\cellcolor{bestcell}{\bfseries\boldmath #1}}
\definecolor{messagebg}{rgb}{0.96,0.98,1.00}
\definecolor{messageborder}{rgb}{0.35,0.35,0.45}
\newcounter{questionbox}
\usepackage{hyperref}

\usepackage[accepted]{icml2026}

\usepackage{amsmath,amssymb,mathtools}
\usepackage{amsthm}
\usepackage{thmtools,thm-restate} % restatable + restate support
\usepackage[capitalize,noabbrev]{cleveref}

\theoremstyle{plain}
\newtheorem{theorem}{Theorem}[section]          % master counter
\newtheorem{proposition}[theorem]{Proposition}  % share the [theorem] counter
\newtheorem{lemma}[theorem]{Lemma}

\theoremstyle{definition}

\theoremstyle{remark}
\newtheorem{remark}[theorem]{Remark}

\makeatletter
\def\thmt@innercounters{equation,section} % add 'section' so restated items keep 4.x instead of B.x
\makeatother

\icmltitlerunning{LineageFlow: High-Fidelity Family-Aware Generation}

\begin{document}
\raggedbottom

\twocolumn[
\icmltitle{LineageFlow: Flow Matching for High-Fidelity Family-Aware Protein Sequence Generation}

% It is OKAY to include author information, even for blind
% submissions: the style file will automatically remove it for you
% unless you've provided the [accepted] option to the icml2026
% package.

% List of affiliations: The first argument should be a (short)
% identifier you will use later to specify author affiliations
% Academic affiliations should list Department, University, City, Region, Country
% Industry affiliations should list Company, City, Region, Country

% You can specify symbols, otherwise they are numbered in order.
% Ideally, you should not use this facility. Affiliations will be numbered
% in order of appearance and this is the preferred way.
\begin{icmlauthorlist}
\icmlauthor{Langzhang Liang}{fdai,saaifs,sii,theta}
\icmlauthor{Ming Yang}{theta,dlut}
\icmlauthor{Yi Feng}{sii,fdmed}
\icmlauthor{Junfan Li}{hitsz}
\icmlauthor{Shirui Pan}{theta,griffith}
\icmlauthor{Yinghui Xu}{fdai}
\icmlauthor{Tianlei Ying}{sii,fdmed}
\icmlauthor{Yizhen Zheng}{theta,monash}
\icmlauthor{Zenglin Xu}{fdai,saaifs}
\end{icmlauthorlist}

\icmlaffiliation{fdai}{AI$^3$ Institute, Fudan University}
\icmlaffiliation{saaifs}{Shanghai Academy of AI for Science}
\icmlaffiliation{sii}{Shanghai Innovation Institute}
\icmlaffiliation{dlut}{School of Information and Communication Engineering, Dalian University of Technology}
\icmlaffiliation{fdmed}{School of Basic Medical Sciences, Fudan University}
\icmlaffiliation{hitsz}{Harbin Institute of Technology, Shenzhen}
\icmlaffiliation{griffith}{School of Information and Communication Technology, Griffith University}
\icmlaffiliation{monash}{Department of Cancer Medicine, Monash University}
\icmlaffiliation{theta}{Theta Team, Australia}

\icmlcorrespondingauthor{Tianlei Ying}{tlying@fudan.edu.cn}
\icmlcorrespondingauthor{Yizhen Zheng}{Yizhen.Zheng@monash.edu}
\icmlcorrespondingauthor{Zenglin Xu}{zenglin@gmail.com}

% You may provide any keywords that you
% find helpful for describing your paper; these are used to populate
% the "keywords" metadata in the PDF but will not be shown in the document
\icmlkeywords{Machine Learning, ICML}

\vskip 0.3in
]

% this must go after the closing bracket ] following \twocolumn[ ...

% This command actually creates the footnote in the first column
% listing the affiliations and the copyright notice.
% The command takes one argument, which is text to display at the start of the footnote.
% The \icmlEqualContribution command is standard text for equal contribution.
% Remove it (just {}) if you do not need this facility.

\printAffiliationsAndNotice{}  % leave blank if no need to mention equal contribution
%\printAffiliationsAndNotice{\icmlEqualContribution} % otherwise use the standard text.

\begin{abstract}
Protein sequence generation for engineering requires samples that are biophysically plausible and, when targeting a family/domain, remain recognizable members while exploring within-family diversity.
Current discrete generative models typically start from uniform or masked-token noise, which discards strong position-specific constraints induced by evolution and forces the model to reconstruct conserved residues from scratch, leading to weak family control and low plausibility.
We propose \emph{LineageFlow}, a Dirichlet flow-matching model that initializes generation from lineage priors derived from ancestral sequence reconstruction, turning generation into structured mutation from an evolved scaffold.
Across diverse protein families, LineageFlow achieves family validity close to held-out natural sequences and improves predicted structural confidence over uniform-/mask-initialized baselines while maintaining substantial novelty and diversity.
Finally, we introduce \emph{rerouting}, a single intermediate-time mutate--select--amplify intervention that enables objective-guided sampling without per-step predictor guidance and yields further gains in plausibility, including a zero-shot enzyme generation case study. Code is available at \url{https://github.com/Jinx-byebye/LineageFlow}. 
\end{abstract}

\section{Introduction}
\label{sec:introduction}

Protein engineering requires searching sequence space for proteins that (i) remain members of a target family/domain and (ii) exhibit favorable properties \citep{arnold1998directedevolution, stemmer1994dnashuffling}. Deep generative models—ranging from large protein language models \citep{rives2021esm, elnaggar2022prottrans, madani2023progen} to discrete diffusion and flow-based approaches \citep{alamdari2023evodiff, atkinson2025proteinbfn,truong2023poet}—offer a promising route. Here we study sequence generation under  evolutionary constraints: samples should be biophysically plausible and, when conditioned on a specific family/domain, remain recognizable members while exhibiting nontrivial within-family diversity. These requirements are foundational: evolutionary consistency is closely tied to functional relevance, and plausibility (e.g., foldability) is a basic sanity check for protein-like sequences.

A key yet underappreciated design choice in discrete generative processes is the initialization (prior / corruption distribution). Common choices—uniform simplex noise or masked-token corruption—are family-agnostic \citep{stark2024dfm, alamdari2023evodiff}. However, family membership is characterized by site-specific evolutionary structure: many sites are highly conserved to maintain structural integrity and biochemical function, while others vary to allow for functional adaptation and conformational diversity \citep{echave2016ratevariation, tokuriki2009dynamism}. Generic priors erase this structure, forcing the denoiser to reconstruct nearly every residue—including conserved positions—from heavily corrupted states. This “from-scratch synthesis” burden is most severe \textbf{\emph{early}} in the trajectory.

We propose an alternative perspective: generation within a lineage is more naturally formulated as transport from an ancestral distribution rather than synthesis from generic noise. We introduce Lineage-Prior Flow Matching (LineageFlow), which replaces generic priors with phylogeny-informed ancestral priors computed via \emph{ancestral sequence reconstruction} (ASR). Rather than learning to synthesize a protein from scratch, the model learns to mutate variable positions and preserve conserved scaffolds.

To support targeted design, we further introduce Rerouting, an inference-time steering mechanism. Unlike continuous guidance methods that require gradients at every step, Rerouting applies a \textbf{\emph{single}}, directed-evolution–inspired "mutate–select–amplify" intervention within the flow~\citep{arnold1998directedevolution}. This steers the trajectory toward user-specified fitness objectives while empirically preserving family validity under our evaluation.
Figure~\ref{fig:method_overview} provides a schematic overview.

\begin{figure*}[t]
    \centering
    \includegraphics[width=\textwidth]{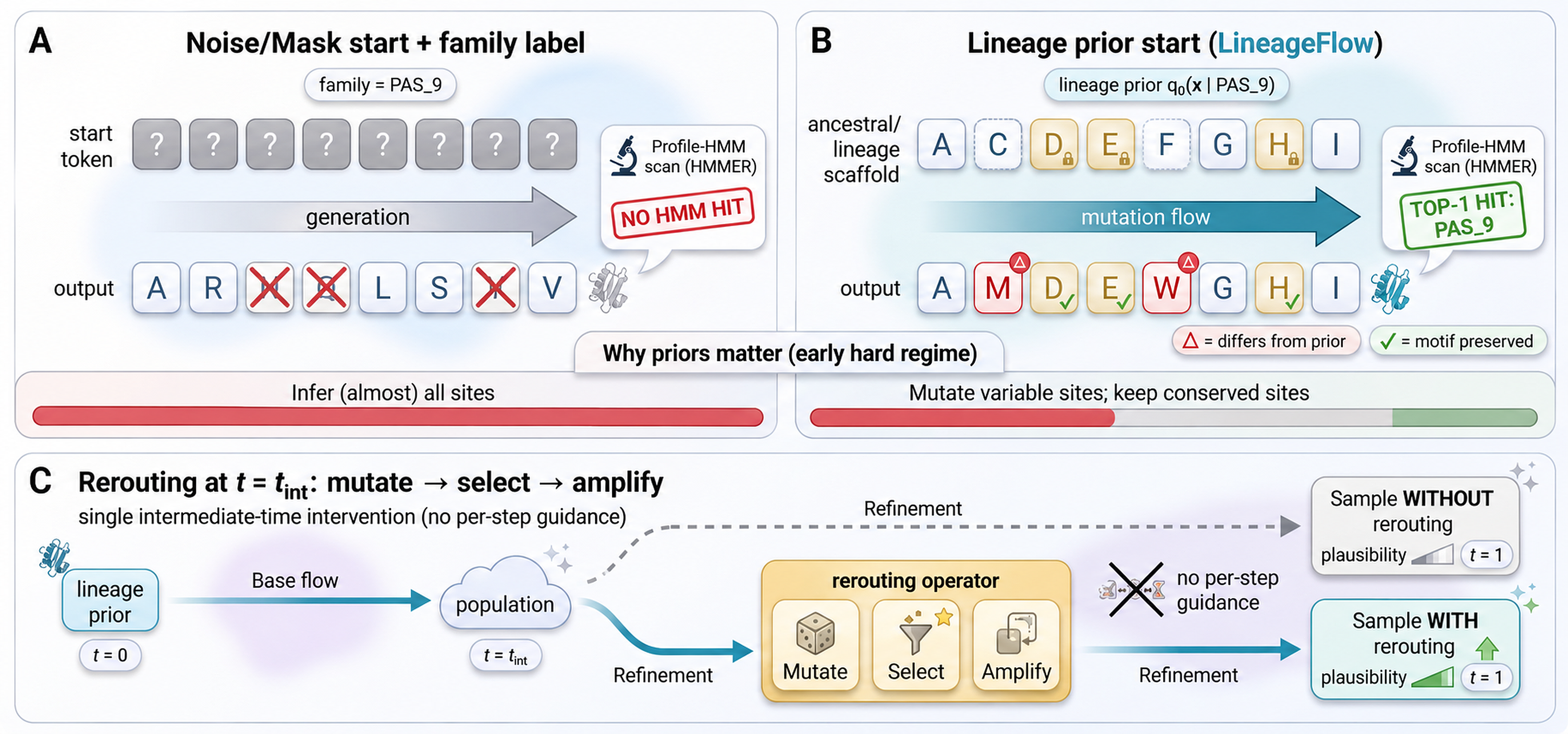}
    \caption{\textbf{LineageFlow overview.}
    \textbf{A:} with noise/mask initialization, conditioning on only a family label can still require generating sequences from scratch and can fail to yield a recognizable family-domain sequence.
    \textbf{B:} lineage priors preserve conserved scaffolds, turning generation into structured mutation within a family manifold.
    \textbf{C:} rerouting applies a single intermediate-time mutate--select--amplify intervention for objective-guided sampling without per-step predictor guidance.}
    \label{fig:method_overview}
\end{figure*}

Our main contributions and findings are:
\begin{itemize}
    \item \textbf{Ancestral priors beat generic noise:} Rather than generating protein sequences from noisy priors, we encode ancestral information in priors, turning generation from ``from-scratch'' denoising under uniform/mask noise into structured mutation from an evolved scaffold.
    \item \textbf{Manifold-preserving guidance via single-step rerouting:} Instead of injecting predictor guidance at every Euler step, rerouting applies a single intermediate-time mutate--select--amplify intervention that steers samples toward objectives while empirically preserving family validity.
    \item \textbf{Near-natural family validity and improved plausibility:} In large-scale experiments across diverse protein families, LineageFlow achieves family validity close to held-out natural sequences with improved plausibility proxies and strong novelty/diversity; rerouting enables a zero-shot enzyme case study on held-out families.
\end{itemize}

\section{Related Work}
\label{sec:related_work}
\paragraph{Protein sequence generative models.}
Large protein language models trained on massive corpora learn rich sequence regularities and can be used for unconditional generation or prompted design \citep{rives2021esm, elnaggar2022prottrans, madani2023progen}.
More recently, generative processes such as diffusion have been adapted to protein sequences \citep{alamdari2023evodiff}, and flow matching has been developed for discrete sequence design via probability-simplex transport \citep{stark2024dfm}.
Protein Bayesian flow networks provide another likelihood-based discrete generative framework \citep{atkinson2025proteinbfn}.
These approaches differ in modeling assumptions and sampling procedures, but many default to generic priors/corruptions (uniform or masked noise), which can make accurate family control challenging when evolutionary constraints are strong.

\paragraph{Family- and domain-conditioned generation.}
Conditioning mechanisms for protein generation typically provide family information explicitly.
One strategy is to attach metadata/control tokens during training and prompt generation at inference (e.g., family tags) \citep{madani2023progen}.
Another strategy conditions on evolutionary context directly, for instance by encoding an input MSA or a set of related sequences and generating new sequences in that context \citep{ram2022fewshot, truong2023poet}.
However, our experiments show that such signals do not necessarily translate into accurate family recognition.
In contrast, LineageFlow achieves high-fidelity family-aware generation by initializing the generative process from a family-specific ancestral prior without feeding family labels or an MSA prompt into the denoiser.

\paragraph{Objective-guided generation and directed evolution.}
Beyond matching natural sequence statistics, many protein design tasks aim to optimize external objectives.
In protein generation, such objectives are commonly incorporated either by (i) explicit conditioning (e.g., control tags or supervised labels) \citep{madani2023progen}, (ii) post-hoc reranking/filtering with learned property predictors \citep{yang2019mldirected}, or (iii) adapting the generator itself via reinforcement learning or model-based optimization \citep{lutz2023topdownrl, brookes2019cbas}.
These strategies can be effective but may trade off against family fidelity if property optimization pulls samples off-manifold.
\emph{Directed evolution} provides a complementary experimental paradigm, iteratively mutating and selecting a population to improve function \citep{arnold1998directedevolution}.
LineageFlow's rerouting adapts this idea as an intermediate-time inference operator (\emph{mutate--select--amplify}), enabling objective-guided sampling while retaining the family-conditioned generative trajectory.

% ---------- Preamble helpers (ICML single-column friendly) ----------
% \usepackage{amsmath,amssymb,mathtools,amsfonts}
% \usepackage{amsthm}
% \newtheorem{proposition}{Proposition}
% \newcommand{\Dir}{\mathrm{Dir}}
% \newcommand{\Diag}{\operatorname{Diag}}
% \allowdisplaybreaks  % (optional) let equations break across pages
% -----------------------------------------------------

\section{Background: Flow Matching on the Probability Simplex}
\label{sec:background}

\paragraph{Notation.}
Let $K\in\mathbb{Z}^+$ be the amino-acid alphabet size and let $S_K=\{\mathbf{x}\in\mathbb{R}^K_{\ge 0}:\mathbf{1}^\top \mathbf{x}=1\}$ denote the probability simplex.
We embed a discrete residue $y\in\{1,\dots,K\}$ as the corresponding vertex $\mathbf{e}_y\in S_K$.
Let $\mathcal{H}$ denote a collection of protein families.
For each family $h\in\mathcal{H}$ with aligned length $L_h$, define the relaxed sequence space $\mathcal{X}_h:=(S_K)^{L_h}$.
An element $\mathbf{X}=(\mathbf{x}^{(1)},\dots,\mathbf{x}^{(L_h)})\in\mathcal{X}_h$ represents per-position categorical distributions, and a discrete aligned sequence $\mathbf{s}=(y^{(1)},\dots,y^{(L_h)})\in\{1,\dots,K\}^{L_h}$ corresponds to the vertex $\mathbf{X}(\mathbf{s})=(\mathbf{e}_{y^{(1)}},\dots,\mathbf{e}_{y^{(L_h)}})\in\mathcal{X}_h$.

\paragraph{Problem setup: Family-aware generation as flow matching on simplex.}
Let $p_{\mathrm{data}}^{(h)}$ denote the empirical distribution on $\mathcal{X}_h$, observed through samples $\mathbf{X}_1\sim p_{\mathrm{data}}^{(h)}$.
Our goal is to learn a generator $p_\theta^{(h)}$ whose decoded samples match $p_{\mathrm{data}}^{(h)}$; we do so by introducing a tractable prior $q_0^{(h)}$ and learning a neural time-dependent vector field $\hat{\mathbf{v}}(\mathbf{X},t;\theta)$ (shared across families) that transports $q_0^{(h)}$ toward $p_{\mathrm{data}}^{(h)}$ via flow matching.
We treat MSA gaps as missing data (excluded from alphabet) and mask gapped positions in the probability transport and training loss. To enable variable-length generation, we sample a gap mask from empirical gap rates and keep gaps fixed throughout generation.
% The next paragraphs review the generic flow-matching construction, and Sec.~\ref{sec:method} instantiates it through our specific choices of priors and probability paths.

\paragraph{Continuous flow matching.}
Fix a family $h$.
Let $\mathbf{X}_1 \sim p_{\mathrm{data}}^{(h)}$ be a data sample and $\mathbf{X}_0 \sim q_0^{(h)}$ a prior distribution on $\mathcal{X}_h$.
Flow matching \citep{lipmanflow, albergobuilding} constructs a generative model via a time-dependent conditional probability path
$\{p_t(\mathbf{X}\mid \mathbf{X}_1)\}_{t\in[0,1]}$ satisfying boundary conditions
$p_0(\mathbf{X}\mid \mathbf{X}_1)=q_0^{(h)}(\mathbf{X})$ and
$p_1(\mathbf{X}\mid \mathbf{X}_1)=\delta(\mathbf{X}-\mathbf{X}_1)$.
This path is associated with a conditional vector field $\mathbf{u}_t(\mathbf{X}\mid \mathbf{X}_1)$ solving the continuity equation
\begin{align*}
\partial_t p_t(\mathbf{X}\mid \mathbf{X}_1)
+\nabla_{\mathbf{X}}\cdot\big(p_t(\mathbf{X}\mid \mathbf{X}_1)\,\mathbf{u}_t(\mathbf{X}\mid \mathbf{X}_1)\big)=0.
\end{align*}
The objective is to learn a neural network $\hat{\mathbf{v}}(\mathbf{X},t;\theta)$ by minimizing the loss below:
\begin{align*}
\mathcal{L}_{\text{FM}}(\theta)
&=
\mathbb{E}_{\substack{
		  t\sim \mathcal{U}[0,1]\\
	  \mathbf{X}_1\sim p_{\mathrm{data}}^{(h)}\\
	  \mathbf{X}\sim p_t(\cdot\mid \mathbf{X}_1)
	}}
	\Big[
	  \big\|
		    \mathbf{u}_t(\mathbf{X}\mid \mathbf{X}_1)
		    -\hat{\mathbf{v}}(\mathbf{X},t;\theta)
		  \big\|_2^2
		\Big].
		\end{align*}

\paragraph{Flow matching on the simplex.}
At the level of a single position, the state is a simplex vector \(\mathbf{x}\in S_K\), and observed residues correspond to vertices \(\mathbf{e}_i\in S_K\).
This makes the Dirichlet distribution a natural choice for priors and intermediate distributions on \(S_K\).

Dirichlet Flow Matching (DFM) \citep{stark2024dfm} defines flows directly on the simplex using a Dirichlet base distribution
\(q_0(\mathbf{x})=\mathrm{Dir}(\mathbf{x};\boldsymbol{\alpha})\), with density
\(\mathrm{Dir}(\mathbf{x};\boldsymbol{\alpha})=\frac{\Gamma(\alpha_0)}{\prod_{i=1}^K \Gamma(\alpha_i)}\prod_{i=1}^K x_i^{\alpha_i-1}\) on \(S_K\) (where \(\alpha_0=\sum_{i=1}^K \alpha_i\)).
For \(\boldsymbol{\alpha}=\mathbf{1}\), \(q_0\) is uniform over \(S_K\), with \(\mathbb{E}_{q_0}[x_i]=1/K\) for all \(i\).
DFM specifies a conditional path in which the Dirichlet mass becomes increasingly concentrated around the target vertex over time:
\begin{align*}
p_t(\mathbf{x}\mid \mathbf{x}_1=\mathbf{e}_i)
&=\mathrm{Dir}\!\big(\mathbf{x};\,\boldsymbol{\alpha}+(t_{\max}t)\,\mathbf{e}_i\big),
\, t\in[0,1].
\end{align*}
Here \(t\in[0,1]\) is the normalized flow time and \(t_{\max}>0\) is a fixed scale parameter controlling the terminal concentration (equivalently, one may define an unnormalized time \(s=t_{\max}t\)). For large \(t_{\max}\), \(p_{t=1}(\cdot\mid \mathbf{e}_i)\) becomes sharply concentrated around \(\mathbf{e}_i\); LineageFlow uses \(t_{\max}=6\).

% \begin{table}[t]
% \centering
% \small
% \setlength{\tabcolsep}{4.5pt}
% \renewcommand{\arraystretch}{1.15}
% \begin{tabular}{ll}
% \toprule
% Notation & Meaning \\
% \midrule
% $K$ & amino-acid alphabet size ($K=20$) \\
% $S_K$ & probability simplex $\{\mathbf{x}\in\mathbb{R}^K_{\ge0}:\mathbf{1}^\top \mathbf{x}=1\}$ \\
% $\mathbf{e}_i$ & simplex vertex (one-hot) for residue $i$ \\
% $h\in\mathcal{H}$ & lineage/family index \\
% $L_h$ & alignment length for family $h$ \\
% $\mathcal{X}_h$ & family sequence space, $\mathcal{X}_h=(S_K)^{L_h}$ \\
% $\mathrm{Dir}(\mathbf{x};\boldsymbol{\alpha})$ & Dirichlet dist. on $S_K$ with concentration $\boldsymbol{\alpha}\in\mathbb{R}^K_{>0}$ \\
% $q_0^{(h)}$ & family-specific ancestral prior on $\mathcal{X}_h$ \\
% $\delta(\mathbf{x}-\mathbf{x}_1)$ & Dirac delta (point mass at $\mathbf{x}_1$) \\
% $\mathcal{K}$ & Markov kernel (mutation/proposal operator) \\
% \bottomrule
% \end{tabular}
% \caption{Core notation used throughout the paper.}
% \label{tab:notation}
% \end{table}

% For a self-contained primer on flow matching on the simplex and the protein-sequence problem setup used here, see Appendix~\ref{app:prelim}.

\section{Method}
\label{sec:method}

\paragraph{Motivation: Phylogeny-Informed Generative Flow.}
Protein sequence space is structured by evolution: modern sequences diverge from shared ancestors, forming distinct family-specific manifolds.
Standard phylogenetic theory allows us to infer these ancestral distributions from multiple sequence alignments (MSAs) using statistical models \citep{eddy1998profile, yang2007paml}.
In a discrete generative process, this prior controls how much structure denoising must reconstruct: a uniform or mask prior requires de novo synthesis at every site, whereas an ASR-rooted prior concentrates probability on conserved positions and preserves uncertainty only where the family varies. This reframes generation as structured mutation from an evolved scaffold and provides a natural family-aligned conditioning signal. Building on this idea, we propose Lineage-Prior Flow Matching (LineageFlow), which uses phylogenetically inferred ancestral priors and learns a simplex-valued flow that transports them toward extant sequences.
	We interpret the normalized flow time $t\in[0,1]$ as a proxy for progression along this lineage-informed generative trajectory:
	\begin{itemize}
    \vspace{-1.0em}
    \setlength{\itemsep}
    {0.2pt}\setlength{\parskip}{0.2pt}
		    \item At $t=0$ (The Root), the prior is an \emph{Ancestral Distribution}: a site-wise posterior over residues at the root of a phylogenetic tree inferred by ancestral sequence reconstruction (ASR) \citep{yang2007paml}.
	    \item As $t$ increases toward $1$ (The Leaves), the flow models the mutational process that transforms this ancestral root into diverse extant sequences, i.e., the sequences in the dataset.
		    \item At an intermediate time $t_{\mathrm{int}}\in(0,1]$, we apply a directed-evolution \emph{rerouting}---\emph{mutate $\rightarrow$ select $\rightarrow$ amplify}---that biases the population toward a user-defined fitness objective, after which we continue the flow to $t=1$.
	\end{itemize}

	% \paragraph{Scope.}
	% LineageFlow is a phylogeny-informed generative model: it uses ASR-derived priors and a learned transport on the simplex to generate sequences within a protein family.
	% It is not intended as a mechanistic simulator of evolution or biophysics (e.g., we do not explicitly model branch-length dynamics, indels, or folding energetics); rather, evolutionary terminology motivates the ancestral prior and rerouting as a directed-evolution-inspired conditioning mechanism.

\subsection{Flow Matching with Ancestral Prior}
We represent the $l$th position of a protein sequence as a relaxed categorical vector $\mathbf{x}^{(l)}\in S_K$. For a family $h$ with alignment length $L_h$, the state space is the product simplex $\mathcal{X}_h:=(S_K)^{L_h}$.
% For simplicity, we present the construction for a single position and omit the site index $l$ unless needed.

\paragraph{Ancestral Initialization (The Prior).}
We consider protein sequences partitioned into homologous families indexed by $h\in\mathcal{H}$ (e.g., sequence clusters or domain-based families). Given a family $h$ and its sequences, we build an MSA, infer a maximum-likelihood phylogenetic tree and substitution model (e.g., with IQ-TREE \citep{nguyen2015iqtree}), and then perform marginal ASR at the root to obtain a site-wise posterior over residues (e.g., with PAML \citep{yang2007paml}). Rather than using raw MSA column frequencies—which are distorted by phylogenetic redundancy and uneven sampling—or initializing from an extant MSA sequence—which anchors generation near training examples and reduces novelty and diversity—we initialize from the ASR root posterior. By integrating information over the inferred tree under an explicit substitution model, it yields a phylogeny-corrected ancestral scaffold while preserving uncertainty at genuinely variable sites. We encode this site-wise root posterior as Dirichlet concentration parameters $\boldsymbol{\alpha}^{(h,l)}\in\mathbb{R}^K_{>0}$ and define the family-specific ancestral prior
\begin{equation*}
q_0(\mathbf{x}^{(l)}\mid h)=\mathrm{Dir}\!\big(\mathbf{x}^{(l)};\,\boldsymbol{\alpha}^{(h,l)}\big).
\end{equation*}
where $\mathbf{x}^{(l)}\in S_K$ denotes the relaxed categorical state at position $l$. We use $\boldsymbol{\alpha}^{(h)}:=(\boldsymbol{\alpha}^{(h,l)})_{l=1}^{L_h}$ to denote the collection of site-wise priors for family $h$. Details of Dirichlet concentration $\boldsymbol{\alpha}^{(h,l)}$ construction are in Appendix~\ref{app:prior-construction}.

Assuming conditional independence across sites given $h$, the family-specific ancestral distribution over a full sequence $\mathbf{X}=(\mathbf{x}^{(1)},\dots,\mathbf{x}^{(L_h)})\in\mathcal{X}_h:=(S_K)^{L_h}$ is
\begin{align}
q_0^{(h)}(\mathbf{X})
=\prod_{l=1}^{L_h}\mathrm{Dir}\!\big(\mathbf{x}^{(l)};\,\boldsymbol{\alpha}^{(h,l)}\big).
\label{eq:prior_family_specific}
\end{align}
The global prior over heterogeneous families is a mixture on a disjoint union of state spaces,
\begin{align}
q_0(\mathbf{X})
= \sum_{h\in\mathcal{H}}\pi_h\,q_0^{(h)}(\mathbf{X}),
\quad \pi_h>0,\ \sum_h \pi_h=1.
\label{eq:prior_mixture}
\end{align}
% Crucially, the lineage index $h \sim \pi$ is sampled \emph{once} per protein, which fixes the alignment length $L_h$ and the coordinate system (MSA columns) in which $\boldsymbol{\alpha}^{(h)}$ is defined. When families do not share a global position-wise correspondence, unconditional generation evolves only within the corresponding family-specific state space $\mathcal{X}_h=(S_K)^{L_h}$; equivalently, the global state space is the disjoint union $\mathcal{X}=\bigsqcup_{h\in\mathcal{H}}(\{h\}\times \mathcal{X}_h)$.

\paragraph{Lineage-Specific Trajectories.}
For a fixed family $h$ and site $l\in\{1,\dots,L_h\}$, and for a target residue $\mathbf{e}_i$, we define a conditional lineage-specific path that interpolates from the ancestral profile toward the observed data endpoint:
\begin{align}\label{eq:cond-fam-path}
p_t^{(h,l)}(\mathbf{x}\mid \mathbf{e}_i)
=\mathrm{Dir}\big(\mathbf{x};\,\valpha^{(h,l)}+(t_{\max}t)\,\mathbf{e}_i\big),
\; t\in[0,1].
\end{align}
We model the continuous transport on the simplex via the vector field:
\begin{equation}\label{eq:cond-fam-vec-field}
	    \mathbf{u}^{(h,l)}_t(\mathbf{x}\mid \mathbf{e}_i) = c_h^{(l)}(x_i,t)\,(\mathbf{e}_i-\mathbf{x}).
\end{equation}
To satisfy the continuity equation (conservation of probability mass along the lineage), the scalar transport speed $c_h^{(l)}$ is derived as (see Appendix~\ref{app:proofs}):
\begin{align}
c_h^{(l)}(z,t)
= -
\frac{t_{\max}\,\partial_a I_z\big(a_{h,l}(t),b_{h,l}\big)\,B\big(a_{h,l}(t),b_{h,l}\big)}
{z^{a_{h,l}(t)-1}(1-z)^{b_{h,l}}},
\label{eq:ch}
\end{align}
where $z = x_i$ denotes the probability of the target amino acid, $a_{h,l}(t) = \alpha_i^{(h,l)} + t_{\max}t$, and $b_{h,l} = \sum_{j\neq i} \alpha_j^{(h,l)}$. Here, $B(\cdot, \cdot)$ is the Beta function and $I_z(\cdot, \cdot)$ is the regularized incomplete beta function.
Intuitively, $c_h^{(l)}$ acts as a family- and destination-specific speed: it depends on the family-specific parameters $\boldsymbol{\alpha}^{(h,l)}$ and the target residue $\mathbf{e}_i$.

% \paragraph{A unified model with latent lineage.}
% The mixture prior $q_0(\mathbf{X})$ defines a single unconditional generative model over a heterogeneous collection of families. However, because families have different lengths and unrelated MSA columns, we treat the global state space as a disjoint union $\mathcal{X}=\bigsqcup_{h\in\mathcal{H}}(\{h\}\times \mathcal{X}_h)$ and keep the lineage index $h$ fixed along the historical flow. Concretely, unconditional sampling proceeds by drawing $h\sim\pi$ once and evolving $\mathbf{X}_t\in\mathcal{X}_h$ under the family-specific transport on $\mathcal{X}_h$ obtained by combining the site-wise pairs $\{(p_t^{(h,l)},\mathbf{u}^{(h,l)}_t)\}_{l=1}^{L_h}$. The neural network $\hat{p}_\theta$ is shared across all families and is trained to predict terminal residues from noisy simplex states, thereby learning a single parameterization that generalizes across lineages.

\paragraph{Training Objective.}
We learn the drift via a classifier parameterization: we train a neural network $\hat{p}_\theta(\mathbf{X}_1 \mid \mathbf{X}_t,t)$ that outputs, for each site $l$, a categorical distribution over the terminal residue $\mathbf{x}^{(l)}_1\in\{\mathbf{e}_1,\dots,\mathbf{e}_K\}$, and then reconstruct the drift field via Eq.~\eqref{eq:posterior-field}.
Implementation details of the denoiser architecture are provided in Appendix~\ref{app:denoiser}.
We minimize the sequence-averaged cross-entropy:
\begin{align}
\mathcal{L}(\theta)
= \mathbb{E}_{h, \mathbf{X}_1, t, \mathbf{X}_t}
\left[
-\frac{1}{|\mathcal{V}(\mathbf{X}_1)|}\sum_{l\in \mathcal{V}(\mathbf{X}_1)}
\log \hat{p}_\theta\!\left(\mathbf{x}_1^{(l)} \mid \mathbf{X}_t, t \right)
\right]
\label{eq:loss}
\end{align}
where $\mathcal{V}(\mathbf{X}_1)$ denotes the set of valid alignment positions.
In classifier-based flow matching, the resulting drift field is reconstructed via:
\begin{equation}
\begin{aligned}
\hat{\mathbf{v}}^{(h,l)}(\mathbf{X},t;\theta)
&= \sum_{i=1}^K
\mathbf{u}^{(h,l)}_t(\mathbf{x}^{(l)}\mid \mathbf{e}_i)\,
\hat{p}_\theta\!\left(\mathbf{x}_1^{(l)}=\mathbf{e}_i\mid \mathbf{X},t\right).
\end{aligned}
\label{eq:posterior-field}
\end{equation}
where $\mathbf{x}^{(l)}$ denotes the $l$th site of $\mathbf{X}$.
We define the sequence-level drift by concatenating site drifts,
$\hat{\mathbf{V}}^{(h)}(\mathbf{X},t;\theta):=\big(\hat{\mathbf{v}}^{(h,l)}(\mathbf{X},t;\theta)\big)_{l=1}^{L_h}$.
See Appendix~\ref{app:implementation} for pseudocode of the training phase (Algorithm~\ref{alg:training}).

\subsection{Generation and Rerouting}
\label{sec:inference}

		We split generation into three phases with a rerouting time $t_{\mathrm{int}}\in[0,1]$:
		\begin{itemize}
        \vspace{-1.0em}
		    \setlength{\topsep}{1pt}
		    \setlength{\partopsep}{0pt}
		    \setlength{\itemsep}{1pt}
		    \setlength{\parskip}{0pt}
		    \setlength{\parsep}{0pt}
		    \item \textbf{Base flow ($t\in[0,t_{\mathrm{int}}]$):} lineage-conditioned drift under the learned vector field.
		    \item \textbf{Rerouting ($t=t_{\mathrm{int}}$):} discrete rounds of \emph{mutate $\rightarrow$ select $\rightarrow$ amplify} that model artificial selection.
		    \item \textbf{Refinement ($t\in[t_{\mathrm{int}},1]$):} continued base-flow integration from a selected particle to obtain a full sample.
           
		\end{itemize}
         \vspace{-1.0em}

\paragraph{Base flow ($t\in[0,t_{\mathrm{int}}]$).}
Given a trained model, the base flow follows the usual flow-matching recipe.
For \emph{unconditional} sampling, we first draw a lineage index $h\sim\pi$, sample an ancestral sequence $\mathbf{X}_0\sim q_0^{(h)}$ (Eq.~\eqref{eq:prior_family_specific}), and integrate the lineage-specific ODE
$\dot{\mathbf{X}}_t = \hat{\mathbf{V}}^{(h)}(\mathbf{X}_t,t;\theta)$,
from $t=0$ to $t=t_{\mathrm{int}}$.
Different families can have different lengths and unrelated MSA columns. The drift $\hat{\mathbf{V}}^{(h)}$ here is constructed from the classifier using Eq.~\eqref{eq:posterior-field} and the \emph{family-specific} fields in Eq.~\eqref{eq:cond-fam-vec-field}.
This procedure defines a baseline distribution at $t_{\mathrm{int}}$ as the mixture of lineage-specific pushforwards induced by $h\sim\pi$ (a distribution over families).
% Conditioned on a fixed lineage $h$, it is simply a distribution on $\mathcal{X}_h$ (the object used in rerouting and Proposition~\ref{prop:mutate-select-amplify}).

\paragraph{Rerouting ($t=t_{\mathrm{int}}$): \emph{mutate $\rightarrow$ select $\rightarrow$ amplify}.}
For the sampled lineage $h$, the base ODE induces a baseline distribution $p_{t_{\mathrm{int}}}$ on $\mathcal{X}_h$ at the rerouting time.
To incorporate a user objective, we treat rerouting as \emph{artificial selection} applied at $t_{\mathrm{int}}$ and realized through rounds of \emph{mutate $\rightarrow$ select $\rightarrow$ amplify}, mirroring directed evolution.
Given a mutation/proposal kernel $\mathcal{K}$, a fitness/assay score $J(\mathbf{X})$, and selection stringency $\beta\ge 0$, a single mutate--select round targets the exponentially tilted law
\begin{equation}
p^{\mathrm{sel}}_{t_{\mathrm{int}}}(\mathbf{X})
\;\propto\;
\big(p_{t_{\mathrm{int}}}\mathcal{K}\big)(\mathbf{X})\;\exp\!\big(\beta\,J(\mathbf{X})\big),
\label{eq:selection_tilt}
\end{equation}
where $p_{t_{\mathrm{int}}}\mathcal{K}$ denotes the proposal law after mutation (taking $\mathcal{K}$ to be the identity kernel recovers the usual exponential tilt of $p_{t_{\mathrm{int}}}$).
In practice, we approximate repeated selection by maintaining a population of particles at time $t_{\mathrm{int}}$ and iterating:
(i) \emph{mutate} via a proposal kernel $\mathcal{K}$ that injects diversity;
(ii) \emph{select} by reweighting particles proportional to $\exp(\beta J)$; and
(iii) \emph{amplify} by resampling according to these weights.
$p^{\mathrm{sel}}_{t_{\mathrm{int}}}$ maximizes expected fitness with minimal KL change from the proposal
  law (Proposition~\ref{prop:mutate-select-amplify}).
% Proposition~\ref{prop:mutate-select-amplify} shows that this selected law $p^{\mathrm{sel}}_{t_{\mathrm{int}}}$ is the unique maximizer of the KL-regularized selection objective in Eq.~\eqref{eq:kl_regularized_selection} (maximize expected fitness with minimal KL change from the proposal law), and that reweight--resample yields a consistent particle approximation.
These rounds update the particle population \emph{at} $t_{\mathrm{int}}$; after the final round, we select a particle $\mathbf{X}^\star_{t_{\mathrm{int}}}$ (e.g., the highest-fitness one).
See Appendix~\ref{app:rerouting_impl} for the concrete fitness and mutation implementations.

\paragraph{Refinement ($t\in[t_{\mathrm{int}},1]$).}
Starting from the selected particle $\mathbf{X}_{t_{\mathrm{int}}}^\star$, we continue integrating the base ODE $\dot{\mathbf{X}}_t = \hat{\mathbf{V}}^{(h)}(\mathbf{X}_t,t;\theta)$ up to $t=1$ to obtain a full generated sample.
See Appendix~\ref{app:implementation} for inference and rerouting pseudocode (Algorithm~\ref{alg:inference}).

\section{Theoretical Analysis}
\label{sec:theory}

We provide theoretical justification for two components of LineageFlow. Proofs are given in Appendix~\ref{app:proofs}.

\begin{proposition}[Lineage-specific Dirichlet transport]
\label{prop:main-component-transport}
For any family $h$, site $l$, vertex $\mathbf{e}_i$, and $t\in[0,1]$, the pair
$\big(p_t^{(h,l)}(\cdot\mid \mathbf{e}_i),\mathbf{u}^{(h,l)}_t(\cdot\mid \mathbf{e}_i)\big)$ satisfies the continuity equation on the simplex $S_K$,
\begin{equation}
    \partial_t p_t^{(h,l)}
    + \nabla_{\mathbf{x}}\!\cdot\!\big(p_t^{(h,l)}\mathbf{u}^{(h,l)}_t\big)=0,
    \label{eq:main_transport_ce}
\end{equation}
with zero net boundary flux. Thus the analytic lineage-specific field conserves probability mass and keeps trajectories on the simplex.
\end{proposition}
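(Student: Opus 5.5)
Throughout we fix $h$, $l$ and $i$ and drop the superscripts. The plan is to reduce the continuity equation on $S_K$ to a one-dimensional equation for the Beta marginal of $z=x_i$. This is the approach of \citet{stark2024dfm} with the uniform $\boldsymbol{\alpha}=\mathbf{1}$ replaced by a general $\boldsymbol{\alpha}^{(h,l)}$.

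\textbf{Step 1: factor the density.} Use the Dirichlet aggregation/stick-breaking property. Under $\mathrm{Dir}(\boldsymbol{\alpha}+t_{\max}t\,\mathbf{e}_i)$ we have $z=x_i\sim\mathrm{Beta}(a(t),b)$ with $a(t)=\alpha_i+t_{\max}t$ and $b=\sum_{j\ne i}\alpha_j$. The normalized remainder $\mathbf{w}=\mathbf{x}_{-i}/(1-z)$ is independent of $z$ and has law $\mathrm{Dir}(\boldsymbol{\alpha}_{-i})$, which does not depend on $t$.

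\textbf{Step 2: the flow only moves $z$.} The field $c(z,t)(\mathbf{e}_i-\mathbf{x})$ gives $\dot z=c(1-z)$ and $\dot x_j=-c\,x_j$ for $j\ne i$. A short computation then gives $\dot w_j=0$. In the coordinates $(z,\mathbf{w})$ the dynamics is therefore purely radial toward $\mathbf{e}_i$, and the $\mathbf{w}$-marginal is preserved automatically. Because the density factorizes and the map $(z,\mathbf{w})\mapsto\mathbf{x}$ has a $t$-independent Jacobian $(1-z)^{K-2}$, the $K$-dimensional continuity equation reduces to the 1D equation
\[
\partial_t f_t(z)+\partial_z\big(f_t(z)\,c(z,t)(1-z)\big)=0,
\]
where $f_t$ is the $\mathrm{Beta}(a(t),b)$ density. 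I would check this reduction directly by writing $\nabla\cdot(p\mathbf{u})$ in the original coordinates, using $\nabla\cdot(\mathbf{e}_i-\mathbf{x})=-(K-1)$ within the $(K-1)$-dimensional tangent space. I expect this bookkeeping, and making the Jacobian factor cancel correctly, to be the main obstacle.

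\textbf{Step 3: solve the 1D equation.} Integrate in $z$ from $0$, assuming zero flux at $z=0$, to get
\[
f_t(z)\,c(z,t)(1-z)=-\partial_t F_t(z)=-t_{\max}\,\partial_a I_z(a,b),
\]
where $F_t=I_z(a(t),b)$ is the Beta CDF. Dividing by $f_t(z)(1-z)=z^{a-1}(1-z)^{b}/B(a,b)$ gives exactly Eq.~\eqref{eq:ch}.

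\textbf{Step 4: boundary flux.} On faces $x_j=0$ with $j\ne i$, the normal component of $\mathbf{u}$ is $-c\,x_j=0$. At $z\to0$ and $z\to1$ the flux is $-t_{\max}\partial_aI_z$, which vanishes because $I_0\equiv0$ and $I_1\equiv1$ for all $a$. Hence the net boundary flux is zero, so mass is conserved. Since the flow is a convex contraction toward the vertex $\mathbf{e}_i$, trajectories stay in $S_K$.
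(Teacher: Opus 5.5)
Your proposal is correct and follows the paper's proof essentially step for step. Both use the Beta--Dirichlet factorization with a time-invariant normalized remainder ($\mathbf{w}$ for you, $r$ in the paper), reduce to the 1D continuity equation for the $\mathrm{Beta}(a(t),b)$ marginal, identify the flux as $-t_{\max}\partial_a I_z(a,b)$, and check that the flux vanishes at $z\in\{0,1\}$ and on the faces $x_j=0$.

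Two small tightenings are needed:
\begin{itemize}
\item On the faces $x_j=0$, the quantity that must vanish is the flux $p\,u_j=-c\,x_j\,p$, not $u_j$ alone. Here $\alpha_j$ can be far below $1$ (as small as $\varepsilon=10^{-3}$), so the density blows up on these faces, and you need $x_j p=O(x_j^{\alpha_j})\to 0$. The paper does include this step.
\item Your ``convex contraction'' claim needs $c\ge 0$. Neither you nor the paper verifies this, but it follows from $\partial_a I_z(a,b)\le 0$, since the Beta law is stochastically increasing in $a$.
\end{itemize}
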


\begin{proposition}[Rerouting is KL-regularized selection]
\label{prop:main-rerouting-selection}
Fix a lineage $h$ and let $p$ be the baseline distribution on $\mathcal{X}_h$ at $t_{\mathrm{int}}$. Let $\mathcal{K}$ be the mutation kernel, $p^{\mathrm{mut}}:=p\mathcal{K}$ the proposal law, and $J:\mathcal{X}_h\to\mathbb{R}$ a measurable fitness score. For $\beta>0$, assume
$Z_\beta=\mathbb{E}_{\mathbf{X}\sim p^{\mathrm{mut}}}[\exp(\beta J(\mathbf{X}))]<\infty$ and define
\begin{equation}
    q_\beta(\mathbf{X})
    =
    \frac{1}{Z_\beta}\exp(\beta J(\mathbf{X}))\,p^{\mathrm{mut}}(\mathbf{X}).
    \label{eq:q_beta_main}
\end{equation}
Then $q_\beta$ is the unique maximizer, over $q\ll p^{\mathrm{mut}}$, of
\begin{equation}
    \max_{q}\;\Big\{\;\mathbb{E}_{\mathbf{X}\sim q}\big[J(\mathbf{X})\big]\;-\;\tfrac{1}{\beta}\,\mathrm{KL}\!\left(q\,\|\,p^{\mathrm{mut}}\right)\Big\}.
    \label{eq:kl_regularized_selection}
\end{equation}
Moreover, the reweight--resample amplification step consistently approximates $q_\beta$ as the population size grows.
\end{proposition}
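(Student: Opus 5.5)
The plan is the standard Gibbs variational argument, i.e.\ the Donsker--Varadhan duality, followed by a self-normalized importance-sampling argument for the particle claim.

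\textbf{Rewriting the objective as a KL divergence.} Fix $q\ll p^{\mathrm{mut}}$ with $\mathrm{KL}(q\,\|\,p^{\mathrm{mut}})<\infty$; if the KL is infinite the objective is $-\infty$ and such $q$ can be discarded. First note that $q_\beta$ is a well-defined probability law with $q_\beta\ll p^{\mathrm{mut}}$. This holds because $0<Z_\beta<\infty$ and $\exp(\beta J)>0$. It also has strictly positive density $dq_\beta/dp^{\mathrm{mut}}=e^{\beta J}/Z_\beta$. Hence $q\ll p^{\mathrm{mut}}$ is equivalent to $q\ll q_\beta$, and the chain rule for Radon--Nikodym derivatives gives
\begin{equation*}
\log\frac{dq}{dp^{\mathrm{mut}}}=\log\frac{dq}{dq_\beta}+\beta J-\log Z_\beta .
\end{equation*}
Integrating against $q$ and dividing by $\beta$ yields the identity
\begin{equation*}
\mathbb{E}_q[J]-\tfrac{1}{\beta}\mathrm{KL}(q\,\|\,p^{\mathrm{mut}})=\tfrac{1}{\beta}\log Z_\beta-\tfrac{1}{\beta}\mathrm{KL}(q\,\|\,q_\beta).
\end{equation*}

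\textbf{Integrability (the main obstacle).} The delicate step is justifying this integration when $J$ is unbounded. We need $\mathbb{E}_q[J]$ to be well defined, possibly $-\infty$, and we must not subtract $\infty-\infty$. I would handle this in two pieces.
\begin{itemize}
\item \emph{Positive part.} Apply the Fenchel--Young, or entropy, inequality $\mathbb{E}_q[\beta J^+]\le \mathrm{KL}(q\,\|\,p^{\mathrm{mut}})+\log \mathbb{E}_{p^{\mathrm{mut}}}[e^{\beta J^+}]$. The last expectation is at most $Z_\beta+1<\infty$, so $\mathbb{E}_q[J^+]<\infty$.
\item \emph{Negative part.} $\mathbb{E}_q[J^-]$ may be $+\infty$. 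In that case the objective equals $-\infty$, which is not maximal, and the identity still holds in the extended sense.
\end{itemize}

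\textbf{Concluding maximality and uniqueness.} Once the identity is in place, the conclusion is immediate. The first term $\tfrac1\beta\log Z_\beta$ does not depend on $q$, and $\mathrm{KL}(q\,\|\,q_\beta)\ge 0$. Equality holds if and only if $q=q_\beta$, by strict convexity, or Gibbs' inequality. So $q_\beta$ attains the value $\tfrac1\beta\log Z_\beta$ and is the unique maximizer.

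\textbf{Consistency of reweight--resample.} Draw particles $\mathbf{X}^1,\dots,\mathbf{X}^N$ i.i.d.\ from $p^{\mathrm{mut}}$, obtained by sampling from $p$ and applying $\mathcal{K}$. Give them weights $w^n\propto e^{\beta J(\mathbf{X}^n)}$. For bounded measurable $f$, the estimator
\begin{equation*}
\sum_n w^n f(\mathbf{X}^n)=\frac{N^{-1}\sum_n e^{\beta J(\mathbf{X}^n)}f(\mathbf{X}^n)}{N^{-1}\sum_n e^{\beta J(\mathbf{X}^n)}}
\end{equation*}
converges almost surely to $\mathbb{E}_{p^{\mathrm{mut}}}[e^{\beta J}f]/Z_\beta=\mathbb{E}_{q_\beta}[f]$. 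This follows from the strong law of large numbers applied to numerator and denominator, both integrable since $Z_\beta<\infty$, and the continuous mapping theorem, since $Z_\beta>0$.

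Multinomial resampling then adds conditionally i.i.d.\ draws from the weighted empirical measure. A conditional Chebyshev bound gives an error of order $\|f\|_\infty^2/N$ for this step. So the resampled empirical measure also satisfies $N^{-1}\sum_n f(\tilde{\mathbf{X}}^n)\to\mathbb{E}_{q_\beta}[f]$ in probability, and almost surely along a countable convergence-determining class. This yields weak convergence to $q_\beta$. If the population at $t_{\mathrm{int}}$ comes from the ODE rather than exact i.i.d.\ sampling, I would state the result conditionally on the particles being i.i.d.\ from $p$.
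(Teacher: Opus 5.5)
Your proposal is correct and follows the paper's proof essentially step for step. For the selection claim you use the same decomposition $\mathbb{E}_q[J]-\tfrac{1}{\beta}\mathrm{KL}(q\,\|\,p^{\mathrm{mut}})=\tfrac{1}{\beta}\log Z_\beta-\tfrac{1}{\beta}\mathrm{KL}(q\,\|\,q_\beta)$. For the amplification claim you use the same self-normalized importance-sampling ratio plus a conditional variance bound of order $\|f\|_\infty^2/M$. Your integrability bookkeeping for unbounded $J$ goes beyond what the paper checks.

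Two small caveats on those extras:
\begin{itemize}
\item Saying that $q_\beta$ attains the value $\tfrac{1}{\beta}\log Z_\beta$ requires $\mathrm{KL}(q_\beta\,\|\,p^{\mathrm{mut}})=\beta\,\mathbb{E}_{q_\beta}[J]-\log Z_\beta<\infty$, i.e.\ $\mathbb{E}_{p^{\mathrm{mut}}}[J^+e^{\beta J}]<\infty$. The assumption $Z_\beta<\infty$ alone does not give this. The paper tacitly assumes it too, and it holds automatically whenever $J$ is bounded above.
\item A conditional Chebyshev bound of order $1/N$ is not summable in $N$. Your almost-sure claim after resampling therefore needs a conditional Hoeffding bound combined with Borel--Cantelli instead.
\end{itemize}
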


Proposition~\ref{prop:main-component-transport} establishes that the lineage-specific analytic field defines a valid simplex flow. Proposition~\ref{prop:main-rerouting-selection} formalizes rerouting as principled selection rather than heuristic guidance: mutation proposes local variants, selection performs an exponential tilt toward high-$J$ regions, and amplification gives a particle approximation of the selected law (full statements and proofs in Propositions~\ref{prop:component-transport} and~\ref{prop:mutate-select-amplify}).

\section{Experiments}
\label{sec:experiments}
In this section, we evaluate LineageFlow on large-scale protein-domain families, with focus on the research questions in Box~\ref{box:experiment_questions}.

\begin{center}
\begingroup
\setlength{\fboxsep}{5pt}
\setlength{\fboxrule}{0.6pt}
\fcolorbox{messageborder}{messagebg}{\begin{minipage}{0.93\linewidth}
\refstepcounter{questionbox}\label{box:experiment_questions}
\textbf{Box~\thequestionbox. Research questions addressed in Experiments.}
\begin{itemize}
    \setlength{\topsep}{0pt}
    \setlength{\partopsep}{0pt}
    \setlength{\parsep}{0pt}
    \setlength{\itemsep}{0.03em}
    \item \textbf{RQ1 (family control):} Can LineageFlow generate sequences that are recognized as belonging to the intended family, compared to (i) explicit family-label conditioning and (ii) strong MSA-prompt conditioning? (Sec.~\ref{sec:pfam_main_results}; Table~\ref{tab:experiments}).
    \item \textbf{RQ2 (plausibility):} Does replacing generic noise priors (uniform/mask) with lineage priors improve sequence plausibility, including against a large-scale pretrained generator? (Sec.~\ref{sec:pfam_main_results}; Table~\ref{tab:experiments}; Fig.~\ref{fig:length_perf}).
    \item \textbf{RQ3 (rerouting effects):} How does rerouting at $t_{\mathrm{int}}$ change the sampling distribution and improve plausibility while staying within the family-specific trajectory? (Sec.~\ref{sec:pfam_main_results}; Fig.~\ref{fig:reroute_3d}).
    \item \textbf{RQ4 (downstream utility):} Can LineageFlow support zero-shot enzyme-like sequence generation? (Sec.~\ref{sec:enzyme_zero_shot}; Fig.~\ref{fig:enzyme_zero_shot}).
    \item \textbf{RQ5 (why lineage priors help):} Why do lineage priors beat generic noise priors in protein sequence generation? (Sec.~\ref{sec:prior_context}; Fig.~\ref{fig:prior_context}).
\end{itemize}
\end{minipage}}
\endgroup
\end{center}

\paragraph{Dataset and split.}
We train and evaluate on Pfam-A RP35 multiple sequence alignments \citep{mistry2021pfam,chen2011representativeproteomes}.
After preprocessing (Appendix~\ref{app:dataset}), we retain $8{,}886$ families and $8{,}942{,}518$ aligned sequences.
% Each family $h$ has a alignment length $L_h$, defining a family-specific state space $\mathcal{X}_h=(S_K)^{L_h}$.
We perform a within-family split, holding out $5\%$ of sequences per family for evaluation.

\paragraph{Training and generation.}
We train a single classifier $\hat{p}_\theta$ shared across all families using Algorithm~\ref{alg:training}, sampling $t\sim\mathcal{U}[0,1]$ with time scale $t_{\max}=6$.
We use $4$ NVIDIA RTX 4090 GPUs and train for one epoch (about $26$ hours) with learning rate $10^{-5}$ and effective batch size $128$. We sample a target family $h\sim\pi$, initialize from its ancestral prior $\mathbf{X}_0\sim q_0^{(h)}$, integrate the base flow to $t=t_{\mathrm{int}}$, apply rerouting at $t_{\mathrm{int}}=0.5$, refine to $t=1$. See Appendix~\ref{app:gen_protocol} for the detailed sampling protocol and the held-out baseline.
% \paragraph{Generation protocol.}
% For the main Pfam results (Table~\ref{tab:experiments}), we generate $N{=}1024$ sequences; 

\paragraph{Evaluation protocol.}
We follow Appendix~\ref{app:evaluation}.
Family validity measures whether generated sequences are recognized as belonging to their intended Pfam family by profile-HMM scanning (HMMER).
Foldability reports predicted structural confidence via OmegaFold~\citep{wu2022omegafold} mean pLDDT.
Self-consistency scores the generated sequence under an inverse folding model (ESM-IF~\citep{hsu22inversefold}) given its predicted backbone.
Novelty measures nearest-neighbor sequence identity to the training corpus using MMseqs2~\citep{steinegger2017mmseqs2}; we report novelty among sequences with pLDDT$\ge 70$.
Diversity measures within-set diversity by clustering foldable sequences (pLDDT$\ge 70$) with MMseqs2 at 80\% identity (coverage 0.8) and reporting the number of clusters.

\begin{table*}[t]
\centering
\small
\setlength{\tabcolsep}{3.0pt}
\renewcommand{\arraystretch}{1.15}
\begin{tabular}{lcccccccc}
\toprule
& \multicolumn{2}{c}{Family validity} & \multicolumn{1}{c}{Foldability} & \multicolumn{1}{c}{Self-cons.} & \multicolumn{4}{c}{Novelty \& Diversity} \\
\cmidrule(lr){2-3}\cmidrule(lr){4-4}\cmidrule(lr){5-5}\cmidrule(lr){6-9}
Method
& $\mathrm{Acc}_{\mathrm{fam}}$ $\uparrow$
& Hit\_any $\uparrow$
	& pLDDT $\uparrow$
	& scPPL $\downarrow$
	& $\mathrm{NNId}$ $\downarrow$
	& $\mathrm{Novelty}@0.8$ $\uparrow$
	& $\mathrm{Novelty}@0.6$ $\uparrow$
	& $\mathrm{Divers.}$ $\uparrow$ \\
	\midrule
	Pfam (Held-out)
	& 96.6 & 100.0
	& $86.4 \pm 10.2$ & $5.02 \pm 2.92$
	& $0.846 \pm 0.169$ & 33.1 & 12.6 & 806 \\
	\midrule
			ProtBFN$^{\dagger}$
			& --- & ---
			& $71.9 \pm 15.7$ & \best{$5.91 \pm 2.75$}
			& $0.520 \pm 0.132$ & 93.0 & 64.0 & \best{604} \\
		PoET$^{\ddagger}$
		& 0.0 & 15.4
		& $52.0 \pm 13.5$ & $13.76 \pm 1.85$
		& --- & --- & --- & 47 \\
	\midrule
	ASR-PSSM (iid)
	& 92.8 & 98.2
	& $70.8 \pm 14.7$ & $7.08 \pm 3.18$
	& $0.708 \pm 0.158$ & 72.6 & 32.0 & 378 \\
		DFM
			& 0.0 & 21.9
			& $46.2 \pm 15.7$ & $12.62 \pm 2.20$
			& --- & --- & --- & 90 \\
			EvoDiff
				& 0.0 & 10.2
				& $45.4 \pm 13.4$ & $12.60 \pm 2.43$
				& --- & --- & --- & 54 \\
		w/o rerouting
		& 93.0 & 96.9
		& $69.6 \pm 14.8$ & $7.96 \pm 3.50$
		& \best{$0.602 \pm 0.149$} & \best{89.6} & \best{52.0} & 440 \\
		LineageFlow
		& \best{95.3} & \best{99.0}
		& \best{$76.6 \pm 13.9$} & $6.67 \pm 3.31$
		& $0.620 \pm 0.145$ & 86.2 & 48.9 & 587 \\
		\bottomrule
		\end{tabular}
		\caption{
		Quantitative comparison on generated sequences using the protocols in Appendix~\ref{app:evaluation}.
Family-validity reports profile-HMM top-1 accuracy $\mathrm{Acc}_{\mathrm{fam}}$ and Hit\_any (fraction of sequences with any HMM hit at $\tau{=}10^{-3}$).
	Foldability is OmegaFold mean pLDDT (mean$\pm$std across sequences), self-consistency is ESM-IF perplexity, and novelty is nearest-neighbor identity to the training corpus with $\mathrm{Novelty}@\delta$ reporting the fraction with $\mathrm{NNId}<\delta$, computed over sequences with pLDDT$\ge 70$.
		Diversity is the number of MMseqs2 clusters at 80\% identity (coverage 0.8) computed on the foldable subset.
				$\dagger$ ProtBFN is evaluated by sampling from released weights only (training code not released), and is pretrained on UniProtCC (71M sequences, $\sim$8$\times$ larger than our Pfam corpus), so it cannot be family-conditioned; we therefore omit family validity and report its foldability, self-consistency, novelty relative to UniProtCC, and diversity for reference only (Appendix~\ref{app:protbfn_dataset}).
		$\ddagger$ PoET is evaluated by sampling from released weights only; at inference time we condition it on an MSA prompt from the intended family (Appendix~\ref{app:baselines}). Because PoET's training corpus is not released (Appendix~\ref{app:poet_dataset}), we omit novelty relative to its training corpus.
			DFM and EvoDiff have no MMseqs hits among the foldable subset, so $\mathrm{NNId}$ and $\mathrm{Novelty}@\delta$ are undefined and shown as ---. 
		}
\label{tab:experiments}
\end{table*}

\subsection{Main results}
\label{sec:pfam_main_results}

\paragraph{Baselines and conditioning.}
We compare against: (i) ASR-PSSM i.i.d.\ (sampling directly from the ancestral Dirichlet prior, without any learned flow), (ii) DFM~\citep{stark2024dfm} (uniform-prior flow matching), (iii) EvoDiff~\citep{alamdari2023evodiff} (mask-based diffusion), and released-weight baselines ProtBFN~\citep{atkinson2025proteinbfn} and PoET~\citep{truong2023poet}. We also report a LineageFlow ablation \emph{w/o rerouting} that integrates the base flow without the intermediate mutate--select--amplify step. PoET is conditioned by providing an MSA prompt derived from the intended family at inference time.
For DFM/EvoDiff, we enable family conditioning by supplying the target family label to the denoiser; LineageFlow instead conditions through the family-specific ancestral prior $q_0^{(h)}$.
Full baseline implementation/training details and external-baseline caveats are provided in Appendix~\ref{app:baselines}. We generate $1024$ sequences for each method.

\paragraph{Main outcome.}
Table~\ref{tab:experiments} shows that initialization largely determines family-conditional generation quality on Pfam. Uniform-/mask-initialized baselines (DFM, EvoDiff) fail to produce family-consistent, plausible sequences ($\mathrm{Acc}_{\mathrm{fam}}=0$; low pLDDT), even when the denoiser receives explicit family labels. This supports our premise that generic noise priors force the model to recover nearly all residues from heavily corrupted states, whereas an ASR-rooted prior starts from a family-specific scaffold. Consistent with this, the ASR-PSSM i.i.d.\ generator (ancestral prior only) already achieves high family validity and reasonable foldability, indicating that the ASR prior carries strong family signal. The base-flow ablation \emph{w/o rerouting} does not improve plausibility over sampling from the ancestral prior, but increases novelty and diversity. Adding rerouting yields near-natural family validity (95.3\% vs.\ 96.6\% for held-out natural sequences), improves foldability. Notably, LineageFlow attains higher pLDDT than ProtBFN in our evaluation despite ProtBFN being pretrained on an $\sim$8$\times$ larger corpus. For novelty, $\mathrm{Novelty}@0.8$ captures non-replicas of training sequences, while the more stringent $\mathrm{Novelty}@0.6$ reflects deeper novelty while maintaining high family validity; among foldable LineageFlow samples (pLDDT$\ge70$), 86.2\% satisfy $\mathrm{NNId}<0.8$ and 48.9\% satisfy $\mathrm{NNId}<0.6$ (Table~\ref{tab:experiments}). Finally, the MSA-prompt PoET baseline shows low Pfam family validity under our metric, underscoring that strong family conditioning signals do not necessarily translate into accurate family recognition. See Appendix~\ref{app:baselines} for potential explanations and additional caveats for PoET/DFM/EvoDiff under this evaluation.

\begin{figure}[t]
	    \centering
	    \includegraphics[width=\columnwidth]{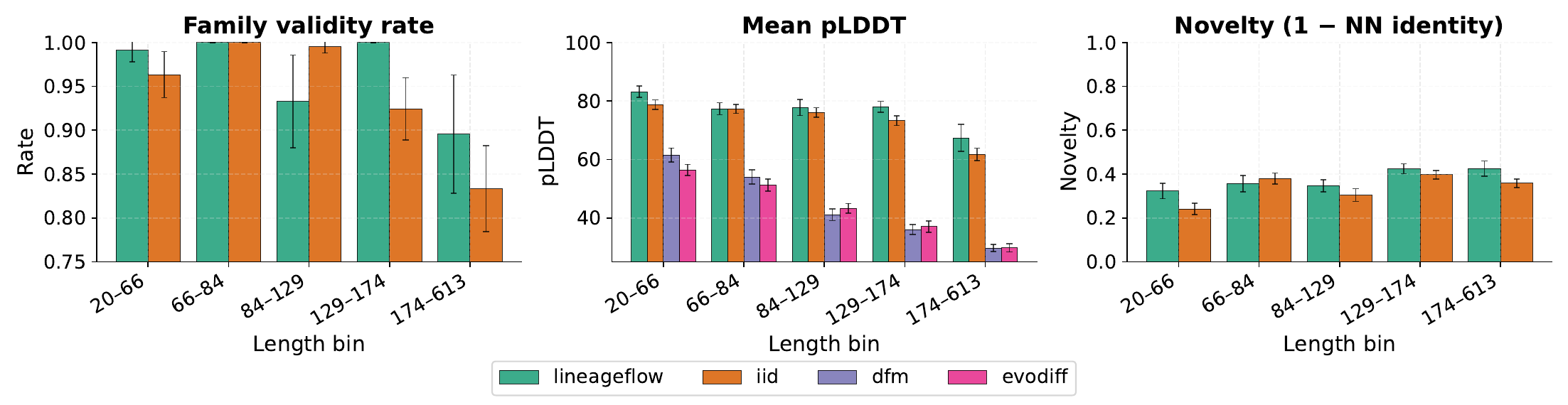}
	    \caption{\textbf{Length-stratified performance.}
	    Pfam unconditional generation metrics as a function of ungapped sequence length (quantile bins).
	    We report mean family validity (profile-HMM top-1), mean pLDDT (OmegaFold), and novelty (1$-$nearest-neighbor identity; computed on the foldable subset, pLDDT$\ge 70$).}
	    \label{fig:length_perf}
\end{figure}

\begin{figure}[t]
    \centering
    \includegraphics[width=\columnwidth]{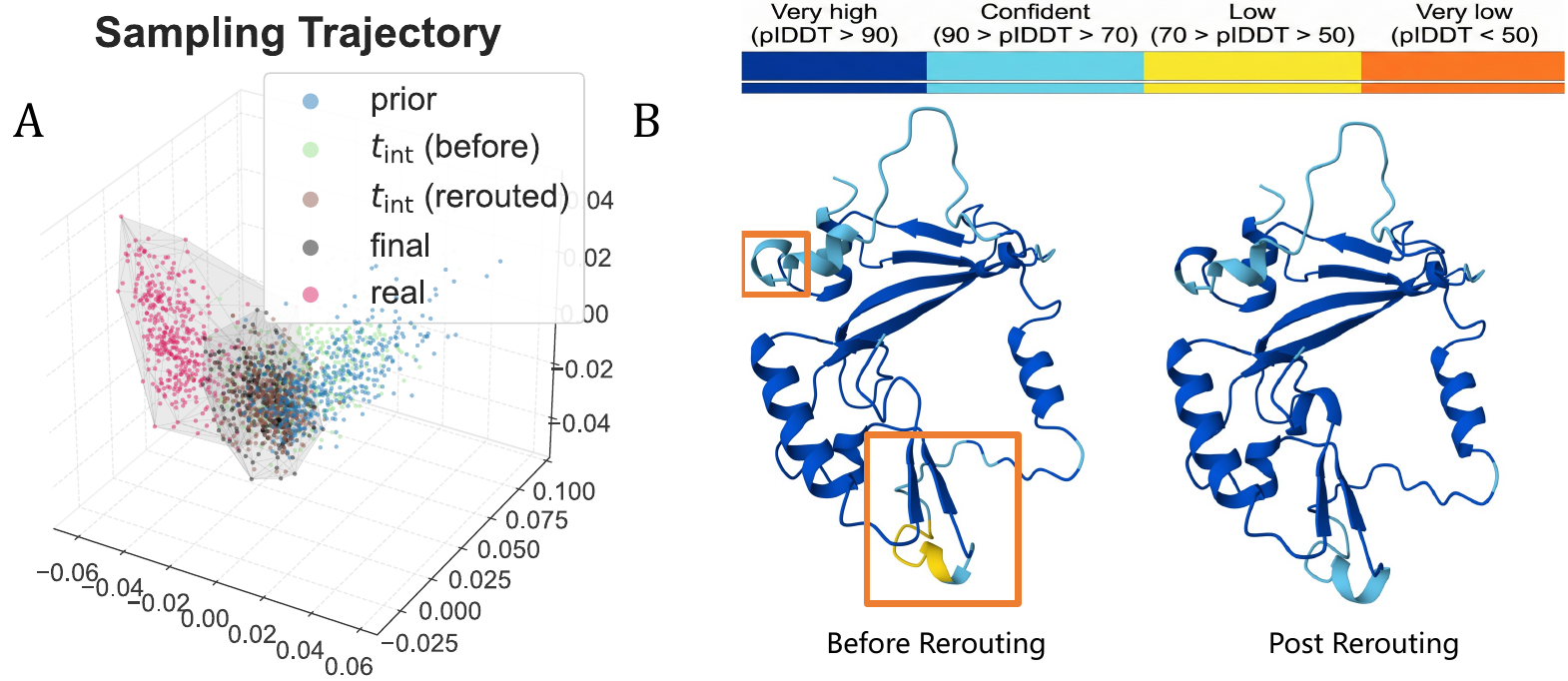}
    \caption{\textbf{Rerouting shifts the sampling distribution and improves fold confidence (example family \textsc{PAS\_9}).}
    \textbf{(A)} \emph{Sampling trajectory.} We embed generated sequences from four stages of sampling---prior, $t_{\mathrm{int}}$ before rerouting, $t_{\mathrm{int}}$ after rerouting, and final samples---together with real Pfam sequences from the same family, using a simple 3D PCA embedding of $k$-mer features. Rerouting moves the population toward the region occupied by real sequences, illustrating how the intervention changes the distribution at fixed family prior.
    \textbf{(B)} \emph{Structural effect.} Representative predicted 3D structures for sequences sampled at $t_{\mathrm{int}}$ before and after rerouting (colored by pLDDT). Mutated positions are highlighted; rerouting increases predicted confidence (pLDDT).}
    \label{fig:reroute_3d}
\end{figure}
\vspace{-1.0em}

\paragraph{Length effects.}
Figure~\ref{fig:length_perf} shows that length is a key driver of generation difficulty: foldability tends to decrease as sequences get longer for all methods.
LineageFlow maintains strong family validity across the full length range and exhibits a shallower foldability drop-off than the prior-only baseline, while also producing systematically higher novelty among foldable samples.

\paragraph{Choosing the rerouting time $t_{\mathrm{int}}$.}
We find $t_{\mathrm{int}}$ is a key control knob for directed-evolution-style rerouting: if $t_{\mathrm{int}}$ is too early, the population is highly corrupted and selection acts on unstructured samples; if $t_{\mathrm{int}}$ is too late, simplex states are near one-hot and rerouting has little room to meaningfully change the sequence.
Qualitatively, Figure~\ref{fig:reroute_3d} shows that rerouting at $t_{\mathrm{int}}$ can shift the intermediate population toward the region occupied by real family sequences, suggesting that intermediate-time intervention provides a useful trade-off between having enough structure for selection to act on, while still leaving room for the population to move.
We therefore use intermediate $t_{\mathrm{int}}=0.5$ in experiments.

\begin{table}[t]
    \centering
    \small
    \setlength{\tabcolsep}{4pt}
    \begin{tabular}{@{}l r@{\hspace{10pt}}l r@{}}
        \toprule
        Method & Time (s) & Method & Time (s) \\
        \midrule
        EvoDiff & 352.41 & DFM & 573.63 \\
        LF (w/o rerouting) & 759.05 & LF (w/ rerouting) & 1046.83 \\
        \midrule
        PoET & 105.32 & ProtBFN & 3724.00 \\
        \bottomrule
    \end{tabular}
    \caption{
    Wall-clock sampling time (seconds) for generating 512 sequences on 4$\times$ NVIDIA RTX 4090 GPUs. LF denotes LineageFlow.
    }
    \label{tab:sampling_time}
\end{table}

Table~\ref{tab:sampling_time} shows that rerouting increases sampling time relative to sampling without rerouting due to population-based mutate--select--amplify and repeated fitness scoring, while remaining within the same order of magnitude as other Pfam-trained baselines.
ProtBFN is substantially slower under its default long-horizon sampling.

\subsection{Zero-shot enzyme generation}
\label{sec:enzyme_zero_shot}

To probe downstream utility, we study a \emph{zero-shot} enzyme setting.
We hold out three enzyme families---\textsc{2OG-FeII\_Oxy}, \textsc{Trp\_syntA}, and \textsc{RNase\_HII}---from denoiser training, but still construct their priors $q_0^{(h)}$ from the corresponding MSAs and phylogenetic trees (Appendix~\ref{app:prior-construction}).
At inference, we generate without fine-tuning: $\theta$ is fixed and the only family-specific context is $q_0^{(h)}$.
We compare base-flow sampling (no rerouting) to selection-guided rerouting.

% We evaluate four family-aware metrics:
% \textbf{(i)} conservation/motif agreement, the fraction of conserved profile-HMM match states (threshold $\tau=0.8$) that agree with the family consensus;
% \textbf{(ii)} novelty, nearest-neighbor identity (MMseqs2) to the Pfam corpus, where $\mathrm{NNId}<0.6$ indicates deeper novelty among family-recognizable sequences; and
% \textbf{(iii--iv)} solubility and thermostability proxy scores from lightweight predictors based on ESM-2 (35M parameters) fine-tuned on external labeled datasets (evaluation only): a solubility predictor fine-tuned on DeepSol and a thermostability predictor fine-tuned on Meltome Atlas data.\citep{khurana2018deepsol,jarzab2020meltome}
% Across all three families, both base flow and rerouted samples preserve motifs and remain novel (Figure~\ref{fig:enzyme_zero_shot}A--B).
% Rerouting increases motif agreement and shifts solubility and thermostability upward (Figure~\ref{fig:enzyme_zero_shot}C--D), while keeping NN identities well below typical homology cutoffs.
% Importantly, rerouting uses only an unsupervised plausibility objective (ESM2 masked pseudo-likelihood), not the solubility/thermostability predictors.
% Together, these results suggest LineageFlow can generate enzyme-like sequences that respect conserved motifs including critical catalytic residues, while matching solubility/thermostability proxies of held-out real sequences, without fine-tuning on the target families.

We report four family-aware metrics: \textbf{(i)} motif agreement, the fraction of conserved profile-HMM match states (threshold $\tau=0.8$) matching the family consensus; \textbf{(ii)} novelty, nearest-neighbor identity (MMseqs2) to the Pfam corpus, where $\mathrm{NNId}<0.6$ indicates deeper novelty among family-recognizable sequences; and \textbf{(iii--iv)} solubility and thermostability proxy scores from lightweight ESM-2 (35M) predictors fine-tuned on external datasets (evaluation only): DeepSol and Meltome Atlas.\citep{khurana2018deepsol,jarzab2020meltome} These predictors are imperfect: the solubility predictor achieves strong held-out accuracy, while thermostability is only moderately accurate (about 50\% for 5-bin prediction), so these trends should be interpreted cautiously. Across all three families, both base-flow and rerouted samples preserve motifs and remain novel (Figure~\ref{fig:enzyme_zero_shot}A--B). Rerouting increases motif agreement and shifts solubility and thermostability upward (Figure~\ref{fig:enzyme_zero_shot}C--D), while keeping NN identities well below typical homology cutoffs. Notably, rerouting optimizes only an unsupervised plausibility objective (ESM2 masked pseudo-likelihood), not the solubility/thermostability predictors. Overall, LineageFlow generates enzyme-like sequences that respect conserved motifs including catalytic residues and match solubility/thermostability proxies of held-out real sequences, without fine-tuning on the target families. 

\begin{figure}[t]
    \centering
    \includegraphics[width=\columnwidth]{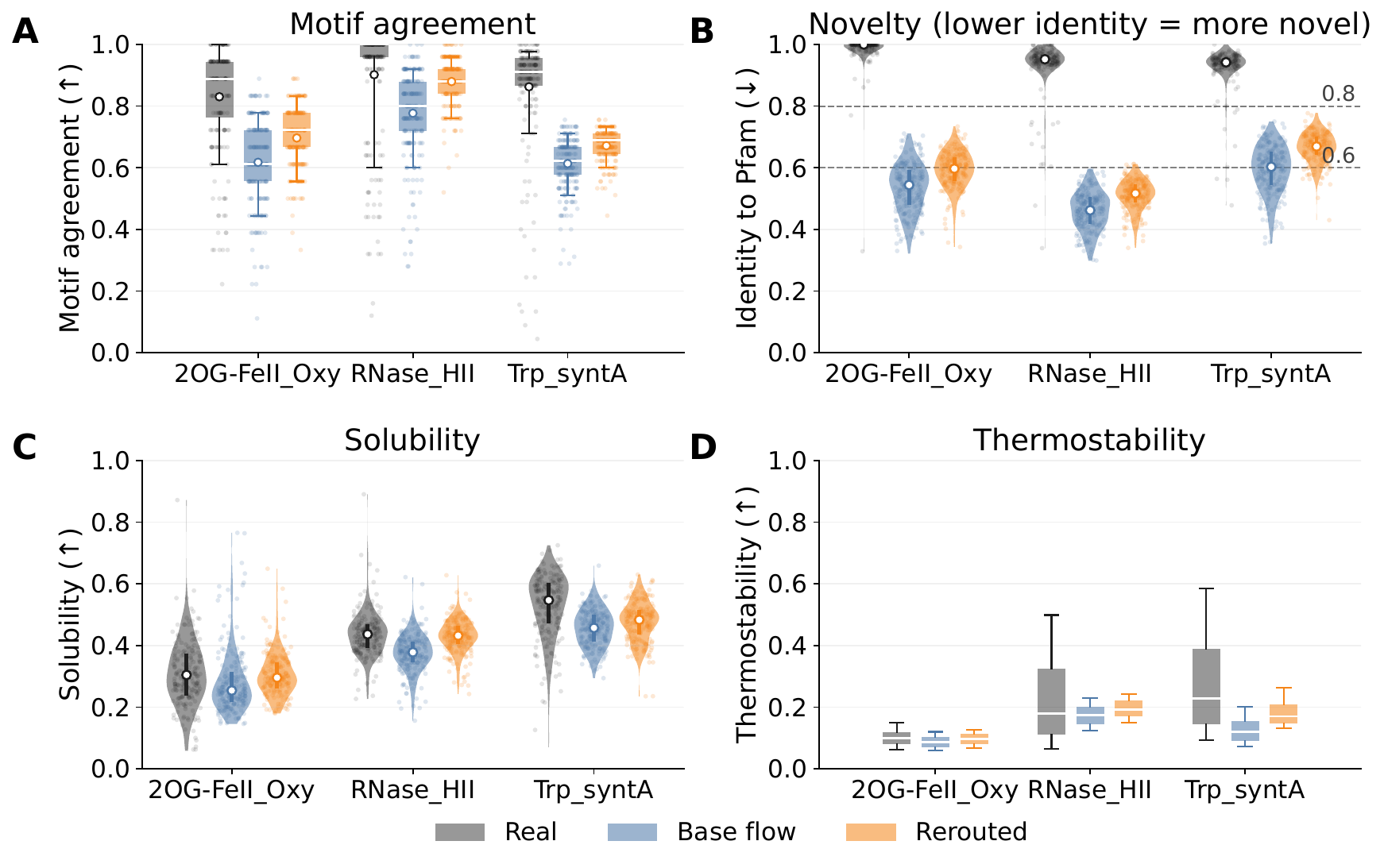}
    \caption{\textbf{Zero-shot enzyme generation with selection-guided rerouting.}
    Three held-out enzyme families; we compare held-out real sequences (Real), base-flow sampling (Base flow), and sampling with selection-guided rerouting at $t_{\mathrm{int}}$ (Rerouted).
    \textbf{(A)} conservation/motif agreement to the family profile-HMM; \textbf{(B)} nearest-neighbor identity to Pfam (lower is more novel; dashed lines mark identity thresholds); \textbf{(C)} solubility proxy; \textbf{(D)} thermostability proxy.}
    \label{fig:enzyme_zero_shot}
    \vspace{-0.5em}
\end{figure}

\subsection{Ancestral priors raise the denoising ceiling in the hard regime}
\label{sec:prior_context}

Flow matching relies on a denoiser to infer terminal residues from heavily corrupted intermediate states, so early mistakes propagate through the posterior field used for generation. Figure~\ref{fig:prior_context} isolates the effect of initialization by contrasting our family-specific ASR-rooted priors with the uniform prior used in DFM. In Panel~A, the Bayes-optimal decoder is markedly more accurate under ASR priors in the hard regime, indicating a higher recoverable signal in early $x_t$ and thus a higher ceiling for any denoiser (see also Appendix~\ref{app:bayes-ceiling}). Panel~B summarizes training metrics and shows that LineageFlow attains substantially higher denoising accuracy in the hard regime (token accuracy in the earliest time bin) as well as higher overall training accuracy and lower training loss. Because sampling integrates the learned vector field from $t{=}0$ to $t{=}1$, errors made in the hard regime can propagate and dominate final quality even if later-time denoising is strong. Overall, these results support the central mechanism of LineageFlow: \emph{family-specific ancestral priors inject phylogenetic context that reduces early-time ambiguity}, improving the learned posterior field and stabilizing generation.

\begin{figure}[t]
    \centering
    \includegraphics[width=\columnwidth]{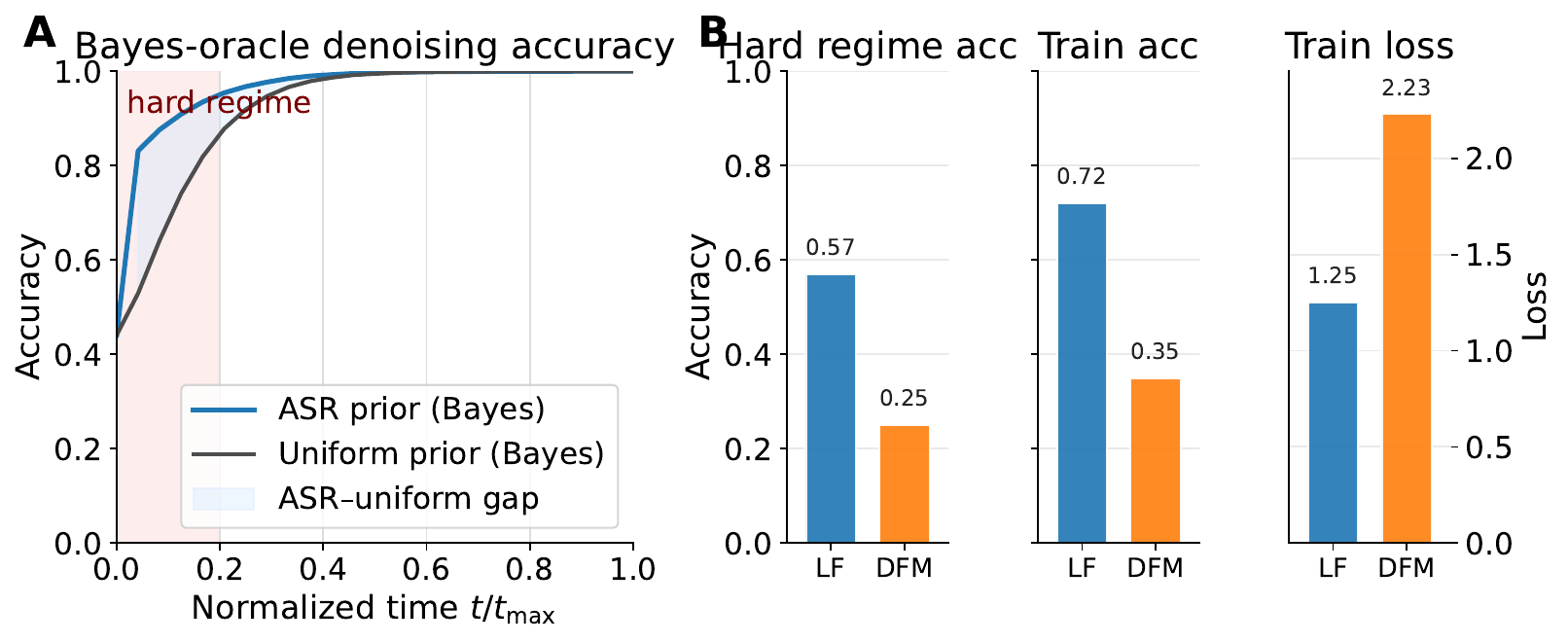}
    \caption{\textbf{Family-specific ASR priors increase recoverable signal in the hard regime.}
    \textbf{(A)} Bayes-oracle denoising accuracy vs.\ normalized time $t$ under an ASR prior (LineageFlow) and a uniform prior (DFM).
    The \emph{\textcolor{magenta}{pink}} region highlights the early-time \emph{hard regime} ($t\le 0.2$), where $x_t$ is most corrupted.
    \textbf{(B)} Training metrics for LineageFlow (LF) and DFM: hard-regime denoising accuracy (token accuracy in the earliest time bin, $t\le0.2$), overall training accuracy, and overall training loss.}
    \label{fig:prior_context}
    \vspace{-0.5em}
\end{figure}	
    %%%%%%%%%%%%%%%%%%%%%%%%%%%%%%%%%%%%%%%%%%%%%%%%%%%%%%%%%%%%%%%%%%%%%%%%%%%%%%%%

\section{Conclusion and Limitations}
\label{sec:conclusion}
We presented LineageFlow, a phylogeny-informed generative model that frames protein sequence generation as continuous-time transport from ASR-derived lineage priors on the simplex.
Across diverse Pfam families, this lineage-prior initialization yields strong family validity and improved plausibility proxies relative to uniform-/mask-initialized baselines, while maintaining substantial novelty and diversity.
	Finally, rerouting provides a simple intermediate-time mutate--select--amplify operator for objective-guided sampling without per-step predictor guidance, enabling a zero-shot enzyme generation case study.

LineageFlow relies on high-quality MSAs and phylogenetic inference to construct priors, and generation is tied to family-specific alignment coordinates (including gap masking) rather than explicitly modeling indels and variable-length evolution.
Our evaluation uses computational proxies (e.g., OmegaFold pLDDT) and predictor-based property scores without experimental validation, so real-world function remains to be tested.
Finally, rerouting introduces additional computational overhead and its effectiveness depends on the fitness function, motivating more robust objectives.

	% ---------- Bib entry example (natbib) ----------
	% @inproceedings{stark2024dfm,
	%   title={Dirichlet Flow Matching with Applications to DNA Sequence Design},
%   author={Stark, Henry and Jing, Bowen and Wang, Chengxiang and Corso, Gabriele and Berger, Bonnie and Barzilay, Regina and Jaakkola, Tommi},
%   booktitle={International Conference on Machine Learning},
%   year={2024}
% }

\clearpage
\section*{Impact Statement}
This work develops a generative modeling approach for protein sequences that leverages phylogenetic context and a selection-guided rerouting mechanism to steer generation.
If used responsibly, such models could accelerate basic research and protein engineering by improving \emph{in silico} exploration of protein families and reducing wet-lab screening burden for benign targets (e.g., enzymes for industrial biocatalysis).

At the same time, protein generation methods can pose dual-use risks: they may lower barriers to designing sequences with harmful properties (e.g., toxins or virulence factors), or be misinterpreted as producing functional proteins without adequate validation.
Our results are computational and do not constitute evidence of real-world function or safety; any downstream use should follow established biosafety, biosecurity, and dual-use review practices and include appropriate experimental validation and screening.
We encourage restricting applications to clearly beneficial settings and incorporating safeguards such as target selection policies, sequence screening against known hazardous motifs/families.
\section*{Acknowledgements}
We would like to thank anonymous reviewers for their constructive feedback.

% In the unusual situation where you want a paper to appear in the
% references without citing it in the main text, use \nocite
% \nocite{langley00}  % template citation (disabled)

\bibliography{example_paper}
\bibliographystyle{icml2026}

%%%%%%%%%%%%%%%%%%%%%%%%%%%%%%%%%%%%%%%%%%%%%%%%%%%%%%%%%%%%%%%%%%%%%%%%%%%%%%%
%%%%%%%%%%%%%%%%%%%%%%%%%%%%%%%%%%%%%%%%%%%%%%%%%%%%%%%%%%%%%%%%%%%%%%%%%%%%%%%
% APPENDIX
%%%%%%%%%%%%%%%%%%%%%%%%%%%%%%%%%%%%%%%%%%%%%%%%%%%%%%%%%%%%%%%%%%%%%%%%%%%%%%%
%%%%%%%%%%%%%%%%%%%%%%%%%%%%%%%%%%%%%%%%%%%%%%%%%%%%%%%%%%%%%%%%%%%%%%%%%%%%%%%
\newpage
\appendix
\onecolumn

\section{Implementation Details}
\label{app:implementation}

\subsection{Pseudocode}

This subsection provides pseudocode for (i) training the classifier parameterization via supervised flow matching (Algorithm~\ref{alg:training}) and (ii) sampling with optional rerouting (Algorithm~\ref{alg:inference}).
Algorithm~\ref{alg:inference} follows the three-phase procedure in Sec.~\ref{sec:inference} (base flow $\rightarrow$ rerouting at $t_{\mathrm{int}}$ $\rightarrow$ refinement).

\begin{algorithm}[H]
\caption{Training Phase (Supervised Lineage-Prior Flow Matching)}
\label{alg:training}
\begin{algorithmic}[1]

\STATE \textbf{Input:}
\STATE \hspace{1.5em} Dataset $\mathcal{D} = \{(\mathbf{X}_n, h_n)\}_{n=1}^N$, where
	        each $\mathbf{X}_n$ is an aligned sequence of length $L_{h_n}$ with residues in $\{1,\dots,K\}$ and missing symbols (gaps/unknown)
\STATE \hspace{1.5em} Family-specific PSSMs
	        $\{\boldsymbol{\alpha}^{(h,l)} \in \mathbb{R}^K_{>0} \}_{h \in \mathcal{H},\, l=1,\dots,L_h}$
\STATE \hspace{1.5em} Neural network classifier
	        $\hat{p}_\theta(\mathbf{X}_1 \mid \mathbf{X}_t, t)$
	        (sequence in, per-position categorical distribution out)
\STATE \hspace{1.5em} Time scale $t_{\max}$ (Dirichlet concentration increment)
\STATE \hspace{1.5em} Learning rate $\eta$
\STATE \textbf{Output:} Trained parameters $\theta$

\vspace{0.5em}
\WHILE{not converged}

		    \STATE \COMMENT{1. Sample data and interpolation time}
		    \STATE Sample a training pair $(\mathbf{X}, h) \sim \mathcal{D}$
		    \STATE Let $L_h$ be the aligned length of $\mathbf{X}$
		    \STATE Sample time $t \sim \mathcal{U}[0,1]$
		    \STATE Let $\mathcal{V}\subseteq\{1,\dots,L_h\}$ be the valid (non-gap/unknown) positions of $\mathbf{X}$
		    \STATE Construct one-hot targets only on valid sites:
		           $\mathbf{x}_1^{(l)} \gets \text{one\_hot}(X^{(l)}) \in \{0,1\}^{K}$ for $l\in\mathcal{V}$

    \vspace{0.25em}
	    \STATE \COMMENT{2. Sample noisy sequence from family-specific Dirichlet path}
		    \FOR{$l = 1$ \TO $L_h$}
		        \STATE If $l\notin\mathcal{V}$, set $\mathbf{x}_t^{(l)} \gets \mathbf{1}/K$ and mark it as missing; \textbf{continue}
		        \STATE Compute Dirichlet parameters
		               $\boldsymbol{\alpha}_t^{(h,l)}
		               \gets \boldsymbol{\alpha}^{(h,l)} + (t_{\max}t)\,\mathbf{x}_1^{(l)}$
		               \COMMENT{Eq.~\eqref{eq:cond-fam-path}}
		        \STATE Sample noisy site
		               $\mathbf{x}_t^{(l)} \sim \mathrm{Dir}\big(\boldsymbol{\alpha}_t^{(h,l)}\big)$
		    \ENDFOR
	    \STATE Form noisy sequence
	           $\mathbf{X}_t \gets (\mathbf{x}_t^{(1)},\dots,\mathbf{x}_t^{(L_h)})$

    \vspace{0.25em}
    \STATE \COMMENT{3. Forward pass and loss}
	    \STATE Predict per-position terminal probabilities:
	           $\hat{\mathbf{P}} = \hat{p}_\theta(\mathbf{X}_1 \mid \mathbf{X}_t, t)$
	           \COMMENT{$\hat{\mathbf{P}} \in [0,1]^{L_h \times K}$}
		    \STATE Compute cross-entropy over valid sites:
		    \[
		        \mathcal{L}(\theta)
		        = -\frac{1}{|\mathcal{V}|}\sum_{l\in\mathcal{V}}
		           \log \hat{p}_\theta\!\left(
		               \mathbf{x}_1^{(l)} \mid \mathbf{X}_t, t
		           \right)
		           \quad \text{(Eq.~\eqref{eq:loss})}
		    \]

    \vspace{0.25em}
    \STATE \COMMENT{4. Parameter update}
    \STATE $\theta \gets \theta - \eta \nabla_\theta \mathcal{L}(\theta)$

\ENDWHILE

\end{algorithmic}
\end{algorithm}

\begin{algorithm}[H]
\caption{Inference and Rerouting}
\label{alg:inference}
\begin{algorithmic}[1]

\STATE \textbf{Input:} Trained network $\hat{p}_\theta$; family priors $\boldsymbol{\alpha}^{(h)}$; time scale $t_{\max}$; rerouting time $t_{\mathrm{int}}\in[0,1]$; steps $N{=}100$ (Euler)
\STATE \hspace{1.5em} Lineage distribution $\pi$ (or fixed lineage $h^\star$); fitness/assay score $J(\mathbf{X})$; rounds $R$; selection strengths $\{\beta_r\}_{r=1}^R$; population size $M$
\STATE \hspace{1.5em} Proposal: \textsc{TokenDirichletMutation}$(\mu,\gamma,\rho,\tau_{\mathrm{tok}})$
\STATE \textbf{Output:} Full generated sample $\mathbf{X}_{1}$

\vspace{0.5em}
\STATE \textbf{// Phase A: Base Flow to $t_{\mathrm{int}}$}
\STATE Sample lineage $h \sim \pi$ (or set $h\gets h^\star$)
\STATE Set $\Delta t \gets 1/N$
\STATE Set $N_{\mathrm{int}} \gets \lfloor t_{\mathrm{int}}/\Delta t \rfloor$
\FOR{$m=1$ \TO $M$}
		    \STATE For each site $l=1,\dots,L_h$, sample $\mathbf{x}_{0}^{(m,l)} \sim \mathrm{Dir}(\boldsymbol{\alpha}^{(h,l)})$
		    \STATE Set $\mathbf{X}^{(m)} \gets (\mathbf{x}_0^{(m,1)},\dots,\mathbf{x}_0^{(m,L_h)})$
		    \FOR{$n = 0$ \TO $N_{\mathrm{int}}-1$}
		        \STATE Set $t \gets n\,\Delta t$
		        \STATE Predict terminal probabilities:
		               $\hat{\mathbf{P}} = \hat{p}_\theta(\cdot \mid \mathbf{X}^{(m)}, t)$
	        \STATE Compute drift $\hat{\mathbf{V}}^{(h)}(\mathbf{X}^{(m)},t;\theta)$
	               via Eq.~\eqref{eq:posterior-field}
	        \STATE Update: $\mathbf{X}^{(m)} \gets \mathbf{X}^{(m)} + \Delta t \cdot \hat{\mathbf{V}}^{(h)}$
	        \STATE Clamp and renormalize each site to enforce $\mathbf{x}^{(l)}\in S_K$; keep gap/missing sites fixed (Appendix~\ref{app:gen_protocol}).
	    \ENDFOR
\ENDFOR

\vspace{0.75em}
\STATE \textbf{// Phase B (optional): $R$ rounds of mutate $\rightarrow$ select $\rightarrow$ amplify}
\FOR{$r=1$ \TO $R$}
	    \STATE \COMMENT{Mutate / propose (token-Dirichlet)}
	    \STATE For each particle $m$, sample a mutation mask $S^{(m)}\subseteq\{1,\dots,L_h\}$ (expected fraction $\mu$; entropy-gated with exponent $\gamma$)
	    \STATE For each $l\in S^{(m)}$, form
	    $q^{(m,l)} \gets \rho\,\hat{p}_\theta(\cdot\mid \mathbf{X}^{(m)},t_{\mathrm{int}};\tau_{\mathrm{tok}}) + (1-\rho)\,\boldsymbol{\alpha}^{(h,l)}/\|\boldsymbol{\alpha}^{(h,l)}\|_1$,
	    sample $y^{(m,l)}\sim \mathrm{Cat}(q^{(m,l)})$, and refresh
	    $\mathbf{x}^{(m,l)} \sim \mathrm{Dir}\!\big(\boldsymbol{\alpha}^{(h,l)} + (t_{\max}t_{\mathrm{int}})\mathbf{e}_{y^{(m,l)}}\big)$
 	    \STATE \COMMENT{Select (reweight)}
	    \STATE Compute scores $s_m \gets \beta_r J(\mathbf{X}^{(m)})$ and weights $w_m \gets \exp(s_m-\max_j s_j)$; normalize $w_m \gets w_m/\sum_j w_j$
 	    \STATE \COMMENT{Amplify (resample)}
	    \STATE Draw indices $I_1,\dots,I_M \overset{\text{i.i.d.}}{\sim} \mathrm{Categorical}(w_1,\dots,w_M)$ and set $\mathbf{X}^{(m)} \gets \mathbf{X}^{(I_m)}$
\ENDFOR

\STATE Select one particle (e.g., $\arg\max_m J(\mathbf{X}^{(m)})$) and denote it $\mathbf{X}^\star_{t_{\mathrm{int}}}$
\STATE \textbf{// Phase C: Refinement to $t=1$}
\STATE Set $\mathbf{X} \gets \mathbf{X}^\star_{t_{\mathrm{int}}}$
\FOR{$n = N_{\mathrm{int}}$ \TO $N-1$}
    \STATE Set $t \gets n\,\Delta t$
    \STATE Predict terminal probabilities $\hat{\mathbf{P}} = \hat{p}_\theta(\cdot \mid \mathbf{X}, t)$
    \STATE Compute drift $\hat{\mathbf{V}}^{(h)}(\mathbf{X},t;\theta)$ via Eq.~\eqref{eq:posterior-field}
    \STATE Update: $\mathbf{X} \gets \mathbf{X} + \Delta t \cdot \hat{\mathbf{V}}^{(h)}$
    \STATE Clamp and renormalize each site to enforce $\mathbf{x}^{(l)}\in S_K$; keep gap/missing sites fixed (Appendix~\ref{app:gen_protocol}).
\ENDFOR
\STATE \textbf{return} $\mathbf{X}$

\end{algorithmic}
\end{algorithm}

\subsection{Denoiser architecture and simplex embedding}
\label{app:denoiser}

Our classifier $\hat{p}_\theta(\mathbf{X}_1\mid \mathbf{X}_t,t)$ is implemented as a transformer encoder initialized from the ESM2 checkpoint \texttt{facebook/esm2\_t33\_650M\_UR50D}.
The model consumes a simplex sequence $\mathbf{X}_t\in (S_K)^{L_h}$ (with $K{=}20$) and outputs per-position logits $\ell^{(l)}\in\mathbb{R}^K$, yielding
$\hat{p}_\theta(\mathbf{x}_1^{(l)}=\mathbf{e}_a\mid \mathbf{X}_t,t)=\mathrm{softmax}(\ell^{(l)})_a$.

To embed continuous simplex inputs, we use the expectation under ESM2's input embedding table: letting $\mathbf{w}_a$ denote the ESM2 input embedding of amino acid $a$, we form
\(
\mathbf{h}^{(l)}_0=\sum_{a=1}^K x^{(l)}_a\,\mathbf{w}_a
\)
for each position $l$.
Gap/unknown positions are represented by setting $\mathbf{x}^{(l)}$ to the uniform vector and passing a boolean \texttt{gap\_flag}; these positions are excluded from the loss and their drift is masked to zero (Appendix~\ref{app:gen_protocol}).
Time conditioning uses a sinusoidal embedding followed by an MLP, added as a per-token bias.

For DFM/EvoDiff baselines, we optionally add a learned family-ID embedding as a per-sequence bias (broadcast across positions), as described in Appendix~\ref{app:baselines}.

\subsection{Rerouting implementation}
\label{app:rerouting_impl}

\paragraph{Fitness function implementation.}
For rerouting in Table~\ref{tab:experiments} and Sec.~\ref{sec:enzyme_zero_shot}, we instantiate the fitness objective $J(\mathbf{X})$ as a \emph{soft-masked ESM2} plausibility score computed directly on simplex states (dropping gap positions).
Let $\mathbf{X}=(\mathbf{x}^{(1)},\dots,\mathbf{x}^{(L)})\in (S_K)^L$ denote an ungapped simplex sequence of length $L$ (with $K=20$) and let $\mathcal{M}\subseteq\{1,\dots,L\}$ be a set of masked residue indices.
We form ESM2 inputs using the \texttt{[MASK]} embedding at masked positions and the expected amino-acid embedding $\sum_{a=1}^K x^{(l)}_a\,\mathbf{w}_a$ at unmasked positions, where $\mathbf{w}_a$ is the ESM2 input embedding for amino acid $a$.
If $\ell^{(l)}\in\mathbb{R}^{K}$ are the resulting ESM2 logits and $p_{\mathrm{ESM}}^{(l)}=\mathrm{softmax}(\ell^{(l)})$, we score masked positions by the expected log-probability under $\mathbf{X}$,
\begin{equation}
\label{eq:esm2_soft_mask_score}
S(\mathbf{X};\mathcal{M})\;=\;\frac{1}{|\mathcal{M}|}\sum_{l\in \mathcal{M}}\sum_{a=1}^K x^{(l)}_a\,\log p_{\mathrm{ESM}}^{(l)}(a),
\end{equation}
i.e., a per-residue mean (normalization enabled).
We use the hybrid rerouting score $J(\mathbf{X}_{\mathrm{mut}})=S_{\mathrm{global}}(\mathbf{X}_{\mathrm{mut}})+\Delta S(\mathbf{X}_{\mathrm{mut}},\mathbf{X}_{\mathrm{base}})$, where
$S_{\mathrm{global}}$ averages $S(\cdot;\mathcal{M})$ over $G$ random masks with mask fraction $p_{\mathrm{mask}}$, and the local mutation effect is
\begin{equation}
\Delta S
\;=\;
S(\mathbf{X}_{\mathrm{mut}};C)\;-\;S(\mathbf{X}_{\mathrm{base}};C),
\qquad
C=\Big\{l:\arg\max \mathbf{x}^{(l)}\neq \arg\max \mathbf{x}^{(l)}_{\mathrm{base}}\ \ \text{or}\ \ \tfrac12\|\mathbf{x}^{(l)}-\mathbf{x}^{(l)}_{\mathrm{base}}\|_1>\delta\Big\}.
\end{equation}
We use ESM2 \texttt{facebook/esm2\_t30\_150M\_UR50D} with $p_{\mathrm{mask}}=0.15$, $G=8$, and $\delta=0.1$.

\paragraph{Mutation operator implementation.}
The mutation/proposal kernel $\mathcal{K}$ in Sec.~\ref{sec:inference} is instantiated as \textsc{TokenDirichletMutation}$(\mu,\gamma,\rho,\tau_{\mathrm{tok}})$ (Algorithm~\ref{alg:inference}).
Given a lineage $h$ and a simplex state $\mathbf{X}\in\mathcal{X}_h$ at $t_{\mathrm{int}}$, we first sample a mutation mask $S\subseteq\{1,\dots,L_h\}$ over valid (non-gap) positions with expected fraction $\mu$, biased toward high-entropy sites under the prior mean $\boldsymbol{\alpha}^{(h,l)}/\|\boldsymbol{\alpha}^{(h,l)}\|_1$:
\begin{equation}
\label{eq:mutation_entropy_gate}
p_l\ \propto\ \Big(\tfrac{H(\boldsymbol{\alpha}^{(h,l)}/\|\boldsymbol{\alpha}^{(h,l)}\|_1)}{\log K}\Big)^{\gamma},
\end{equation}
where $H(\cdot)$ is Shannon entropy and the proportionality constant is chosen so that the average mutation probability over valid sites is approximately $\mu$ (with clipping to $[0,1]$).
For each mutated site $l\in S$, we form a token distribution
\begin{equation}
q^{(l)}
\;=\;
\rho\,\mathrm{softmax}\!\big(\ell^{(l)}/\tau_{\mathrm{tok}}\big)\;+\;(1-\rho)\,\boldsymbol{\alpha}^{(h,l)}/\|\boldsymbol{\alpha}^{(h,l)}\|_1,
\end{equation}
where $\ell^{(l)}$ are the denoiser logits at $(\mathbf{X},t_{\mathrm{int}})$, sample $y^{(l)}\sim\mathrm{Cat}(q^{(l)})$, and refresh the simplex state by
\begin{equation}
\mathbf{x}^{(l)} \sim \mathrm{Dir}\!\big(\boldsymbol{\alpha}^{(h,l)} + (t_{\max}t_{\mathrm{int}})\mathbf{e}_{y^{(l)}}\big).
\end{equation}
Non-mutated sites are left unchanged.

\paragraph{Rerouting hyperparameters.}
Unless otherwise noted, rerouting uses $R=3$ rounds with population size $M=8$, rerouting time $t_{\mathrm{int}}=0.5$, and selection strength $\beta_r\equiv 4.0$.
We use \textsc{TokenDirichletMutation} with mutation fraction $\mu=0.25$, entropy-gating exponent $\gamma=1.0$, posterior mixing $\rho=0.8$, token temperature $\tau_{\mathrm{tok}}=1.0$, and token sampling (rather than argmax), together with systematic resampling.

\subsection{Generation protocol and held-out baseline}
\label{app:gen_protocol}
For Table~\ref{tab:experiments}, we generate $N{=}1024$ sequences.
For each sequence, we sample a lineage $h\sim\pi$ (Eq.~\eqref{eq:prior_mixture}), where we use a smoothed family-frequency distribution $\pi_h\propto n_h^{\tau}$ with $\tau=0.5$ and $n_h$ the number of training sequences in family $h$.
For computational efficiency, we cap the number of distinct sampled families at $128$ while preserving the relative weights under $\pi$.
Given $h$, we sample an ancestral initialization $\mathbf{X}_0\sim q_0^{(h)}$ (Eq.~\eqref{eq:prior_family_specific}) and generate using Algorithm~\ref{alg:inference}.
Unless otherwise noted, Table~\ref{tab:experiments} uses rerouting with $t_{\mathrm{int}}=0.5$ and the hyperparameters listed in Appendix~\ref{app:rerouting_impl}.
We integrate the ODE with a fixed-step Euler solver using $100$ total steps over $t\in[0,1]$, and decode the final simplex states by per-position argmax (temperature $1$).
After each Euler step we enforce simplex constraints by clamping and renormalizing $\mathbf{x}^{(l)}\in S_K$, and we keep gap positions fixed as uniform.
We compute the scalar speed $c_h^{(l)}(z,t)$ in Eq.~\eqref{eq:ch} using a numerically stable incomplete-beta implementation with $z$ clamped to $[\varepsilon,1-\varepsilon]$.
We model alignment gaps by sampling a family-specific gap mask from empirical gap statistics estimated on the training split, and keep gap positions fixed throughout base-flow integration and selection; for evaluation, we remove gaps to obtain ungapped amino-acid sequences of variable length.

\paragraph{Gap mask sampling and handling.}
For each family $h$ and alignment column $l$, we estimate the per-column missing rate $m_l^{(h)}$ on the training split (counting \texttt{-} and \texttt{X} as missing; Appendix~\ref{app:dataset}).
For each generated sequence, we sample a fixed gap mask independently across columns, $\mathrm{gap}^{(l)}\sim\mathrm{Bernoulli}(m_l^{(h)})$, and keep these positions fixed for the full trajectory.
Because our simplex state uses $K{=}20$ (no gap token), we represent gap positions by setting $\mathbf{x}^{(l)}$ to the uniform vector in $S_K$ and passing the boolean gap mask as a separate \texttt{gap\_flag} input to the denoiser; we also mask the drift so that gap positions have zero update.
At evaluation time, we remove gap positions from the decoded sequence.
As a reference baseline, we report the same metrics on an equally sized sample of held-out sequences, drawn to match the generated family distribution under $\pi$.

\subsection{Baseline implementations and conditioning}
\label{app:baselines}
\paragraph{Pfam-trained baselines.}
We include an ASR-PSSM i.i.d.\ baseline that samples each site independently from the family-specific ancestral prior (with the same gap masking), i.e., no denoiser and no learned flow, to isolate the contribution of the prior alone.
DFM~\citep{stark2024dfm} is implemented as a uniform-prior ablation of our pipeline: it uses the same denoiser backbone and training loop as LineageFlow, but replaces the family-specific ancestral prior with a uniform Dirichlet prior and disables rerouting.
EvoDiff~\citep{alamdari2023evodiff} is adapted from the official implementation and uses masked-token corruption.
To enable family-conditioned generation for DFM and EvoDiff, we provide the target family label to the denoiser during training and inference; concretely, we embed the family ID and add it as a per-sequence bias to token representations (broadcast across positions).
In contrast, LineageFlow does not take explicit family labels and instead conditions via the family-specific prior $q_0^{(h)}$.
All Pfam-trained methods are trained on the same Pfam dataset for one epoch; LineageFlow/DFM use $\sim$650M-parameter denoisers and EvoDiff uses a $\sim$500M-parameter denoiser.

\paragraph{Released-weight baselines.}
We additionally include ProtBFN~\citep{atkinson2025proteinbfn} by directly sampling from released weights; the official release does not include training code, so we cannot retrain it on Pfam or add family conditioning.
ProtBFN is pretrained on UniProtCC (71M sequences; Appendix~\ref{app:protbfn_dataset}), so it cannot be family-conditioned; we omit family validity and report novelty relative to UniProtCC for reference (Table~\ref{tab:experiments}).
We also include PoET~\citep{truong2023poet} as an \emph{MSA-prompt} baseline by sampling from released weights and conditioning on an MSA prompt derived from the intended family.
For each target family $h$, we provide PoET with a prompt formed from the aligned Pfam MSA for $h$ (RP35 filtered) and sample sequences autoregressively conditioned on this prompt.
We follow PoET's default prompt construction and budgets: prompt sequences are selected with the neighbor sampler ($\theta=0.2$) and truncated to at most 128 sequences and 24,576 total prompt tokens per family.
We sample with temperature 1 and no top-$k$/$p$ truncation.
Since PoET emits ungapped amino-acid sequences, we set the per-family maximum generation length to the median ungapped length of the family alignment (capped at 1024) to avoid forcing residues into gap-heavy alignment columns.
Because PoET's training code and training corpus are not released (Appendix~\ref{app:poet_dataset}), we omit novelty relative to its training corpus and report family validity, foldability, self-consistency, and within-set diversity for reference.

Note that Pfam-trained methods are generated under the same family sampler $h\sim\pi_h\propto n_h^{0.5}$ (Appendix~\ref{app:gen_protocol}); released-weight baselines are generated with their released inference pipelines, so their length distributions can differ.

\begin{remark}[Why family-aware baselines can fail at Pfam scale]
Family validity in our setting is a strict evaluation: each sample must place its intended family as the top profile-HMM match among thousands of families, including shallow families and many close neighbors.
Under generic noise priors (DFM/EvoDiff), the denoiser must infer conserved motifs and domain signatures from heavily corrupted states; a global family label provides only coarse information and does not supply site-wise evolutionary context, so generation can drift off family manifolds and yield sequences with no Pfam HMM hit.
PoET conditions via an MSA prompt, but this is an \emph{in-context} interface rather than an explicit family prior: generation is not constrained to Pfam's family-specific coordinate system (domain boundaries, gap patterns), and the model still performs from-scratch synthesis of conserved positions.
Moreover, PoET is pretrained on UniRef50-derived homology sets rather than Pfam domain MSAs (Appendix~\ref{app:poet_dataset}), so prompting with Pfam alignments can constitute a distribution shift.
Consistent with these factors, PoET exhibits low Pfam hit coverage in our experiments, suggesting that strong prompts alone do not guarantee accurate family recognition under profile-HMM evaluation at this scale.
\end{remark}

\section{Proofs}
\label{app:proofs}

\subsection{Denoising ceiling monotonicity under conditioning}
\label{app:bayes-ceiling}
%\begin{lemma}
%\label{lem:bayes-ceiling}
%Let $Y\in\{1,\dots,K\}$ be a discrete label and let $(X,Z)$ be any random %variables on a common probability space. Define the Bayes-optimal 0--1 accuracies
%\begin{equation}
%A(X)\;:=\;\mathbb{E}\Big[\max_{i\in\{1,\dots,K\}} \mathbb{P}(Y=i\mid X)\Big],\qquad
%A(X,Z)\;:=\;\mathbb{E}\Big[\max_{i\in\{1,\dots,K\}} \mathbb{P}(Y=i\mid X,Z)\Big].
%\end{equation}
%Then $A(X,Z)\ge A(X)$.
%\end{lemma}

%\begin{proof}
%By the tower property of conditional expectation, for each $i$ we have
%\begin{equation}
%\mathbb{P}(Y=i\mid X)\,=\,
%\mathbb{E}\big[\mathbb{P}(Y=i\mid X,Z)\mid X\big].
%\end{equation}
%Let $f:\mathbb{R}^K\to\mathbb{R}$ be $f(\mathbf{p})=\max_i p_i$. Since $f$ is the %pointwise maximum of linear functionals, it is convex. Applying Jensen's %inequality conditional on $X$ yields
%\begin{equation}
%\max_i \mathbb{P}(Y=i\mid X)\,=\,
%f\!\Big(\mathbb{E}\big[\mathbb{P}(Y=\cdot\mid X,Z)\mid X\big]\Big)
%\;\le\;
%\mathbb{E}\Big[f\big(\mathbb{P}(Y=\cdot\mid X,Z)\big)\,\Big|\,X\Big]
%\,=\,
%\mathbb{E}\Big[\max_i \mathbb{P}(Y=i\mid X,Z)\,\Big|\,X\Big].
%\end{equation}
%Taking expectations over $X$ gives $A(X)\le A(X,Z)$, proving the claim.
%\end{proof}

\begin{lemma}
\label{lem:bayes-ceiling}
Let $Y\in\{1,\dots,K\}$ be a discrete label and let $(X,Z)$ be any random variables on a common probability space. Define the Bayes-optimal 0--1 accuracies
\begin{equation}
A(X)\;:=\;\mathbb{E}_{X}\left[\max_{i\in\{1,\dots,K\}} \mathbb{P}(Y=i\mid X)\right],
\qquad
A(X,Z)\;:=\;\mathbb{E}_{X,Z}\left[\max_{i\in\{1,\dots,K\}} \mathbb{P}(Y=i\mid X,Z)\right].
\end{equation}
Then $A(X,Z)\ge A(X)$.
\end{lemma}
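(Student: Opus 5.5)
The plan is to derive the inequality from the tower property of conditional expectation combined with conditional Jensen's inequality for a convex function.

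First, for each label $i$, I will write the coarser posterior as an average of the finer one:
\[
\mathbb{P}(Y=i\mid X)=\mathbb{E}\big[\mathbb{P}(Y=i\mid X,Z)\,\big|\,X\big]\quad\text{a.s.}
\]
This holds because $\sigma(X)\subseteq\sigma(X,Z)$, so the tower property applies to the indicator $\mathbf{1}\{Y=i\}$. Collecting the $K$ coordinates, the posterior vector $\mathbf{p}(X):=(\mathbb{P}(Y=i\mid X))_{i=1}^K$ is the conditional expectation given $X$ of the vector $\mathbf{q}(X,Z):=(\mathbb{P}(Y=i\mid X,Z))_{i=1}^K$.

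Second, I will set $f(\mathbf{p})=\max_i p_i$ on $\mathbb{R}^K$. As a pointwise maximum of linear functionals, $f$ is convex. The components of $\mathbf{q}$ lie in $[0,1]$, so the relevant integrability holds automatically. Conditional Jensen then gives, almost surely,
\[
\max_i \mathbb{P}(Y=i\mid X)=f\big(\mathbb{E}[\mathbf{q}\mid X]\big)\le \mathbb{E}\big[f(\mathbf{q})\mid X\big].
\]
Taking expectations and applying the tower property once more yields $A(X)\le\mathbb{E}[f(\mathbf{q}(X,Z))]=A(X,Z)$.

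The only delicate step is justifying Jensen for a vector-valued conditional expectation and a convex $f$ on $\mathbb{R}^K$. I will avoid general machinery and argue elementarily. For each fixed $j$, we have $q_j\le\max_i q_i$, so monotonicity of conditional expectation gives $\mathbb{E}[q_j\mid X]\le\mathbb{E}[\max_i q_i\mid X]$ a.s. Since there are only finitely many $j$, taking the maximum over $j$ on a common probability-one set gives $\max_j p_j(X)\le\mathbb{E}[\max_i q_i\mid X]$ a.s. This sidesteps regular conditional distributions entirely; it needs only existence of the conditional probabilities, which are bounded random variables.
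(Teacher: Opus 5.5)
Your proposal is correct and follows the paper's proof: the tower property writes $\mathbb{P}(Y=i\mid X)$ as a conditional average of $\mathbb{P}(Y=i\mid X,Z)$, Jensen's inequality in the form ``max of averages $\le$ average of max'' is applied conditionally on $X$, and an outer expectation over $X$ finishes the argument. Your monotonicity argument (each $q_j\le\max_i q_i$, then a maximum over finitely many a.s.\ inequalities) proves that same Jensen step directly, and your explicit conditioning on $X$ is a more careful version of the paper's shorthand $\mathbb{E}_Z[\cdot]$.
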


\begin{proof}
By the tower property of conditional expectation, for each $i$ we have
\begin{equation}
\mathbb{P}(Y=i\mid X)
\,=\,
\mathbb{E}_Z\big[\mathbb{P}(Y=i\mid X,Z)].
\end{equation}
Applying Jensen's inequality conditional on $X$ yields
\begin{equation}
    \max_{i\in\{1,\ldots,K\}} \mathbb{P}(Y=i\mid X)
    =\max_{i\in\{1,\ldots,K\}} \mathbb{E}_Z[\mathbb{P}(Y=i\mid X,Z)]
    \leq \mathbb{E}_Z\left[\max_{i\in\{1,\ldots,K\}} \mathbb{P}(Y=i\mid X,Z)\right].
\end{equation}
Taking expectation over $X$ gives $A(X)\le A(X,Z)$, proving the claim.
\end{proof}

In our setting, take $Y$ to be the terminal residue identity at a site, $X=X_t$ the corrupted simplex state at time $t$, and $Z$ the phylogenetic context (lineage and ASR prior parameters). The lemma shows that conditioning on ancestral context cannot decrease the Bayes-optimal denoising accuracy, motivating the hard-regime analysis in Sec.~\ref{sec:prior_context} (Figure~\ref{fig:prior_context}).

\subsection{Correctness of lineage-specific transport}
\begin{restatable}{proposition}{ComponentTransport}
\label{prop:component-transport}
For any family $h$, site $l\in\{1,\dots,L_h\}$, vertex $\mathbf{e}_i$, and $t\in[0,1]$, the pair
$\big(p_t^{(h,l)}(\cdot\mid \mathbf{e}_i),\,\mathbf{u}^{(h,l)}_t(\cdot\mid \mathbf{e}_i)\big)$
satisfies the continuity equation on the simplex $S_K$:
\begin{equation}
\begin{split}
\partial_t p_t^{(h,l)}(\mathbf{x}\mid \mathbf{e}_i)
&+ \nabla_{\mathbf{x}}\cdot\Big(p_t^{(h,l)}(\mathbf{x}\mid \mathbf{e}_i)\,
\mathbf{u}^{(h,l)}_t(\mathbf{x}\mid \mathbf{e}_i)\Big) \\
&= 0 \quad\text{for all }\mathbf{x}\in S_K.
\end{split}
\end{equation}
Moreover, the net boundary flux on $S_K$ vanishes, ensuring that total probability mass is conserved ($\int_{S_K} p_t^{(h,l)} d\mathbf{x} = 1$) throughout the process.
\end{restatable}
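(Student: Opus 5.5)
We will follow the Dirichlet flow-matching argument of \citet{stark2024dfm}, adapted to an arbitrary (non-uniform) base concentration $\valpha^{(h,l)}$. Fix $h,l,i$ and drop the superscripts. Write $\valpha_t=\valpha+s\,\mathbf{e}_i$ with $s=t_{\max}t$. The key structural fact is that under $\mathrm{Dir}(\valpha_t)$ the vector $\mathbf{x}$ factorizes as $x_i=z\sim\mathrm{Beta}(a(t),b)$, with $a(t)=\alpha_i+s$ and $b=\sum_{j\ne i}\alpha_j$, independent of the normalized remainder $\mathbf{w}=\mathbf{x}_{-i}/(1-z)\sim\mathrm{Dir}(\valpha_{-i})$. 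Crucially, the law of $\mathbf{w}$ does not depend on $t$.

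\textbf{Reduction to one dimension.} The field $c(x_i,t)(\mathbf{e}_i-\mathbf{x})$ moves $\mathbf{x}$ along the ray from $\mathbf{e}_i$. Along that ray, $\mathbf{w}$ is invariant, since $\mathbf{x}_{-i}$ is scaled by $(1-z)$, and $\dot z=c(z,t)(1-z)$. So in coordinates $(z,\mathbf{w})$ the flow acts only on $z$. It therefore suffices to show that the one-dimensional velocity $v(z,t)=c(z,t)(1-z)$ transports $\mathrm{Beta}(a(t),b)$. Writing the Beta density as $f_t(z)=z^{a-1}(1-z)^{b-1}/B(a,b)$, the 1D continuity equation $\partial_t f_t+\partial_z(f_t v)=0$ integrates in $z$ to
\[
f_t(z)\,v(z,t)=-\partial_t F_t(z),
\]
where $F_t(z)=I_z(a(t),b)$ and the constant of integration vanishes because $F_t(0)=0$. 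Substituting $\partial_t=t_{\max}\partial_a$ gives
\[
c(z,t)=-\frac{t_{\max}\,\partial_a I_z(a,b)\,B(a,b)}{z^{a-1}(1-z)^{b}},
\]
which is exactly Eq.~\eqref{eq:ch}. Hence the ODE $\dot{\mathbf{x}}=\mathbf{u}_t$ pushes $p_0$ forward to $p_t$.

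\textbf{From pushforward to the PDE.} Equivalence between the pushforward property and the continuity equation on the simplex interior follows in the standard way, by differentiating $\int\varphi\,dp_t$ for a test function $\varphi$. Alternatively, one can compute directly. In the chart $(z,\mathbf{w})$ the density factorizes as $f_t(z)\,g(\mathbf{w})\,(1-z)^{K-2}$, where the last factor is the Jacobian, and the field has only a $z$-component. The $K-1$ dimensional divergence therefore reduces to $\partial_z$ acting on $f_t v$, with the Jacobian and $g$ factors passing through. One must check that the Jacobian factor is correctly accounted for. This works because the change of variables maps the ray field to $(v,\mathbf{0})$ and the density includes the Jacobian, so the divergence identity transforms covariantly.

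\textbf{Boundary flux; main obstacle.} Since $\mathbf{u}$ is tangent to the affine hull of $S_K$ (its coordinates sum to $0$), the only flux is through the faces. On the face $x_j=0$ with $j\ne i$, the normal component is $c\,(0-x_j)=0$, so there is no flux. On the face $x_i=0$ (i.e.\ $z=0$), the flux equals $f_t v$ evaluated at $z=0$, which is $-\partial_t F_t(0)=0$. Thus the net flux vanishes, and integrating the PDE gives $\frac{d}{dt}\int p_t=0$.

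The main obstacle is regularity near $z\to0$ and $z\to1$, where $c$ may blow up, for instance like $z^{1-a}$ when $\alpha_i<1$. We plan to handle this by working on $[\varepsilon,1-\varepsilon]$ and letting $\varepsilon\to0$. The product $p_t u$ stays integrable because $f_t v=-\partial_tF_t$ is bounded: $\partial_a I_z$ is continuous on $[0,1]$ and vanishes at both endpoints.
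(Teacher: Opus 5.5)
Your proposal is correct and takes essentially the same route as the paper's proof. The ray structure of $\mathbf{u}_t$ leaves the normalized remainder $\mathbf{x}_{-i}/(1-x_i)$ invariant, and the Beta--Dirichlet factorization reduces the problem to the one-dimensional continuity equation for the $\mathrm{Beta}(a(t),b)$ marginal. The flux is then identified as $-t_{\max}\,\partial_a I_z(a,b)$, which yields Eq.~\eqref{eq:ch}. The boundary flux vanishes because $\partial_a I_z$ is zero at $z\in\{0,1\}$ and $u_j=-c\,x_j$ on the faces $x_j=0$.

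Your explicit Jacobian bookkeeping and $\varepsilon$-truncation near the endpoints add rigor that the paper skips. There is one small slip: in the $(z,\mathbf{w})$ chart, the Lebesgue density on $S_K$ is $f_t(z)\,g(\mathbf{w})\,(1-z)^{-(K-2)}$, not $f_t(z)\,g(\mathbf{w})\,(1-z)^{K-2}$. Equivalently, the chart density $f_t g$ equals it times the Jacobian $(1-z)^{K-2}$, which is exactly what your covariance argument uses, so the conclusion stands.
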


\begin{proof}
Fix a family $h$, a site $l\in\{1,\dots,L_h\}$, and a target vertex $\mathbf{x}_1=\mathbf{e}_i$.
Recall the definitions from the main text:
\begin{align}
    p_t^{(h,l)}(\mathbf{x}\mid \mathbf{e}_i) &= \mathrm{Dir}\big(\mathbf{x};\,\boldsymbol{\alpha}^{(h,l)}+(t_{\max}t)\,\mathbf{e}_i\big), \\
    \mathbf{u}^{(h,l)}_t(\mathbf{x}\mid \mathbf{e}_i) &= c_h^{(l)}(x_i,t)\,(\mathbf{e}_i-\mathbf{x}).
\end{align}
The vector field $\mathbf{u}^{(h,l)}_t$ is defined on the simplex $S_K$. To verify the continuity equation
\begin{equation}
    \partial_t p_t^{(h,l)} + \nabla_{\mathbf{x}}\cdot \big( p_t^{(h,l)} \mathbf{u}^{(h,l)}_t \big) = 0,
\end{equation}
we first exploit the geometric symmetry of the flow. Observe that the vector field is purely radial centered at the target vertex $\mathbf{e}_i$. For any two non-target indices $j, k \neq i$, the equations of motion are $\dot{x}_j = -c_h^{(l)}(x_i, t)x_j$ and $\dot{x}_k = -c_h^{(l)}(x_i, t)x_k$. Consequently, the ratio $x_j/x_k$ is invariant in time:
\begin{equation}
    \frac{d}{dt}\left(\frac{x_j}{x_k}\right) = \frac{\dot{x}_j x_k - x_j \dot{x}_k}{x_k^2} = \frac{-c_h^{(l)} x_j x_k - x_j (-c_h^{(l)} x_k)}{x_k^2} = 0.
\end{equation}
Equivalently, define the normalized non-target proportions $r_j := x_j/(1-x_i)$ for $j\neq i$. Since $\dot{x}_j=-c_h^{(l)}(x_i,t)x_j$ and $\frac{d}{dt}(1-x_i)=-\dot{x}_i=-c_h^{(l)}(x_i,t)(1-x_i)$, we have $\dot r_j=0$ and thus $r=(r_j)_{j\neq i}$ is invariant along the flow. Moreover, the Dirichlet distribution admits the standard Beta--Dirichlet factorization: if $x_i\sim \mathrm{Beta}(a_{h,l}(t),b_{h,l})$ and $r\sim \mathrm{Dir}(\boldsymbol{\alpha}^{(h,l)}_{-i})$ independently, then $\mathbf{x}_{-i}=(1-x_i)r$ has joint law $\mathrm{Dir}(\boldsymbol{\alpha}^{(h,l)}+(t_{\max}t)\,\mathbf{e}_i)$. Therefore only the Beta marginal of $x_i$ evolves with $t$, while the conditional law of $r$ given $x_i$ is time-invariant. This justifies reducing the continuity equation on $S_K$ to the 1D continuity equation for the marginal density of $x_i$.

Let $f_t(x)$ denote the marginal density of the $i$-th coordinate $x_i$ under the Dirichlet distribution $p_t^{(h,l)}$. The properties of the Dirichlet imply that $x_i$ follows a Beta distribution:
\begin{equation}
    f_t(x) = \mathrm{Beta}\big(x;\, a_{h,l}(t), b_{h,l}\big) = \frac{x^{a_{h,l}(t)-1}(1-x)^{b_{h,l}-1}}{B\big(a_{h,l}(t), b_{h,l}\big)},
\end{equation}
where $a_{h,l}(t) = \alpha^{(h,l)}_i + t_{\max}t$ and $b_{h,l} = \sum_{j \neq i} \alpha^{(h,l)}_j$.
We can express the cumulative distribution function (CDF) using the regularized incomplete beta function $I_x(a, b)$:
\begin{equation}
    F_t(x) = \int_0^x f_t(z)\,dz = I_x\big(a_{h,l}(t), b_{h,l}\big).
\end{equation}
The density is the spatial derivative: $f_t(x) = \partial_x I_x\big(a_{h,l}(t), b_{h,l}\big)$.

The scalar velocity of the coordinate $x_i$ induced by the vector field is the projection onto the $i$-th axis:
\begin{equation}
    v(x,t) := [\mathbf{u}^{(h,l)}_t(\mathbf{x})]_i = c_h^{(l)}(x,t)(1-x).
\end{equation}
We must satisfy the 1D continuity equation:
\begin{equation}
    \partial_t f_t(x) + \partial_x \Big( f_t(x) v(x,t) \Big) = 0.
\end{equation}
First, we compute the time derivative of the density. Since $a_{h,l}(t)=\alpha^{(h,l)}_i+t_{\max}t$ with $da_{h,l}(t)/dt = t_{\max}$:
\begin{equation}
    \partial_t f_t(x) = \partial_t \big[ \partial_x I_x(a_{h,l}(t), b_{h,l}) \big] 
    = t_{\max}\,\partial_a \big[ \partial_x I_x(a, b_{h,l}) \big]\Big|_{a=a_{h,l}(t)}.
\end{equation}
By Schwarz's theorem, we interchange the mixed partial derivatives $\partial_a \partial_x = \partial_x \partial_a$:
\begin{equation}
    \partial_t f_t(x) = t_{\max}\,\partial_x \Big[ \partial_a I_x\big(a_{h,l}(t), b_{h,l}\big) \Big].
\end{equation}
Comparing this to the continuity equation $\partial_t f_t(x) = - \partial_x (\text{Flux})$, we identify the flux term:
\begin{equation}
    f_t(x) v(x,t) = - t_{\max}\,\partial_a I_x\big(a_{h,l}(t), b_{h,l}\big).
\end{equation}
Substituting $v(x,t) = c_h^{(l)}(x,t)(1-x)$ and solving for $c_h^{(l)}(x,t)$:
\begin{equation}
    c_h^{(l)}(x,t) = - t_{\max}\,\frac{\partial_a I_x\big(a_{h,l}(t), b_{h,l}\big)}{f_t(x) (1-x)}.
\end{equation}
Expanding $f_t(x)$ explicitly yields the form presented in the main text:
\begin{equation}
    c_h^{(l)}(x,t) = - t_{\max}\,\frac{\partial_a I_x\big(a_{h,l}(t), b_{h,l}\big) \, B\big(a_{h,l}(t), b_{h,l}\big)}{x^{a_{h,l}(t)-1} (1-x)^{b_{h,l}}}.
\end{equation}
Finally, we check boundary conditions. The flux must vanish at the simplex boundaries ($x=0$ and $x=1$) for probability to be conserved.
The regularized beta function $I_x(a,b)$ is 0 at $x=0$ and 1 at $x=1$ for all $a, b > 0$. Therefore, its derivative with respect to $a$ is:
\begin{equation}
    \partial_a I_0(a,b) = 0, \quad \text{and} \quad \partial_a I_1(a,b) = \partial_a (1) = 0.
\end{equation}
Thus, the flux is zero at $x=0$ and $x=1$. It remains to check the other simplex faces $\{x_j=0\}$ for $j\neq i$. On such a face, the outward normal is aligned with $-\mathbf{e}_j$ and the normal component of the flux is $(p_t^{(h,l)}u_j)\big|_{x_j=0}$. Since $u_j=-c_h^{(l)}(x_i,t)x_j$, we have $(p_t^{(h,l)}u_j)\big|_{x_j=0}=0$. More precisely, near $x_j=0$ the Dirichlet density behaves as $p_t^{(h,l)}(\mathbf{x})=O(x_j^{\alpha^{(h,l)}_j-1})$ with $\alpha^{(h,l)}_j>0$, hence $x_j p_t^{(h,l)}(\mathbf{x})\to 0$ and the flux vanishes on that boundary as well. Therefore the net boundary flux on $\partial S_K$ is zero and mass is conserved.
\end{proof}

\subsection{Rerouting as KL-regularized selection}
\begin{restatable}{proposition}{MutateSelectAmplify}
\label{prop:mutate-select-amplify}
Fix a lineage $h$ and let $p$ denote the baseline distribution on $\mathcal{X}_h$ at time $t_{\mathrm{int}}$ induced by the base flow. Let $\mathcal{K}$ be a Markov kernel on $\mathcal{X}_h$ (mutation/proposal) and define the mutated proposal law $p^{\mathrm{mut}} := p\mathcal{K}$. Let $J:\mathcal{X}_h\to\mathbb{R}$ be measurable and assume
$Z_\beta := \mathbb{E}_{\mathbf{X}\sim p^{\mathrm{mut}}}\big[\exp(\beta J(\mathbf{X}))\big] < \infty$ for some $\beta>0$. Define
\begin{equation}
q_\beta(\mathbf{X})
\;:=\;
\frac{1}{Z_\beta}\exp(\beta J(\mathbf{X}))\,p^{\mathrm{mut}}(\mathbf{X}).
\label{eq:q_beta_def}
\end{equation}
Then:
\vspace{-0.25em}
\noindent For convenience, we restate the KL-regularized selection objective (Eq.~\eqref{eq:kl_regularized_selection}):
\begin{equation*}
    \max_{q}\;\Big\{\;\mathbb{E}_{\mathbf{X}\sim q}\big[J(\mathbf{X})\big]\;-\;\tfrac{1}{\beta}\,\mathrm{KL}\!\left(q\,\|\,p^{\mathrm{mut}}\right)\Big\}.
\end{equation*}
\vspace{-0.25em}
\begin{enumerate}
    \item[\textup{(i)}] (\emph{Select}) $q_\beta$ is the unique maximizer of the KL-regularized objective
    in Eq.~\eqref{eq:kl_regularized_selection} over probability measures $q$ on $\mathcal{X}_h$ with $q\ll p^{\mathrm{mut}}$. In particular, if $\mathcal{K}$ is the identity kernel (no mutation), then $q_\beta$ coincides with the selection tilt in Eq.~\eqref{eq:selection_tilt}.
    \item[\textup{(ii)}] (\emph{Amplify}) Let $\mathbf{X}^{(1)},\dots,\mathbf{X}^{(M)}\overset{\textup{i.i.d.}}{\sim}p^{\mathrm{mut}}$ and define normalized weights
    $\bar{w}_m := \exp(\beta J(\mathbf{X}^{(m)}))/\sum_{j=1}^M \exp(\beta J(\mathbf{X}^{(j)}))$. If $\widetilde{\mathbf{X}}^{(1)},\dots,\widetilde{\mathbf{X}}^{(M)}$ are obtained by multinomial resampling,
    i.e.\ indices $I_1,\dots,I_M\overset{\textup{i.i.d.}}{\sim}\mathrm{Categorical}(\bar{w}_1,\dots,\bar{w}_M)$ and $\widetilde{\mathbf{X}}^{(m)}:=\mathbf{X}^{(I_m)}$, then for any bounded measurable test function $f$,
    \begin{equation}
    \frac{1}{M}\sum_{m=1}^M f(\widetilde{\mathbf{X}}^{(m)}) \;\xrightarrow[M\to\infty]{\;\textup{in probability}\;}\; \mathbb{E}_{\mathbf{X}\sim q_\beta}\big[f(\mathbf{X})\big].
    \label{eq:resample_consistency}
    \end{equation}
\end{enumerate}
\end{restatable}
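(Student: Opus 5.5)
For part (i) I will use the Gibbs variational principle, via a single KL identity. Take any probability measure $q\ll p^{\mathrm{mut}}$ with $\mathrm{KL}(q\,\|\,p^{\mathrm{mut}})<\infty$ and $\mathbb{E}_q[J]$ well defined (otherwise the objective is $-\infty$ or undefined and $q$ is not a competitor). Since $Z_\beta<\infty$ and $\exp(\beta J)>0$, the measure $q_\beta$ is a probability measure equivalent to $p^{\mathrm{mut}}$, with density $dq_\beta/dp^{\mathrm{mut}}=\exp(\beta J)/Z_\beta$. Hence $q\ll q_\beta$, and the chain rule for Radon--Nikodym derivatives gives
\begin{equation*}
\log\frac{dq}{dp^{\mathrm{mut}}}=\log\frac{dq}{dq_\beta}+\beta J-\log Z_\beta \quad q\text{-a.s.}
\end{equation*}
Integrating against $q$ yields
\begin{equation*}
\mathbb{E}_q[J]-\tfrac{1}{\beta}\mathrm{KL}(q\,\|\,p^{\mathrm{mut}})=\tfrac{1}{\beta}\log Z_\beta-\tfrac{1}{\beta}\mathrm{KL}(q\,\|\,q_\beta).
\end{equation*}
The right-hand side is maximized exactly when $\mathrm{KL}(q\,\|\,q_\beta)=0$, that is, when $q=q_\beta$. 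This gives existence and uniqueness of the maximizer at once. For the identity kernel, $p\mathcal{K}=p$, so $q_\beta$ is literally Eq.~\eqref{eq:selection_tilt}.

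The main technical point is integrability. I must check that $q_\beta$ is itself admissible, i.e.\ that $\mathbb{E}_{q_\beta}[J]$ and $\mathrm{KL}(q_\beta\,\|\,p^{\mathrm{mut}})=\beta\mathbb{E}_{q_\beta}[J]-\log Z_\beta$ are finite. I must also justify splitting $\mathbb{E}_q[\log dq/dp^{\mathrm{mut}}]$ into separate terms. My plan is to restrict to competitors $q$ with $\mathbb{E}_q[|J|]<\infty$ and note that then every term in the identity is finite or $+\infty$ with a consistent sign. If $\mathbb{E}_{q_\beta}[J^+]$ could be infinite, I would handle it by truncating $J$ at level $n$ and passing to the limit with monotone convergence. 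Alternatively, I can note that $\mathbb{E}_{q_\beta}[J^+]<\infty$ whenever $Z_{\beta'}<\infty$ for some $\beta'>\beta$. If that is not available, I would state the result for $q$ ranging over measures where the objective is finite.

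For part (ii), write $w(\mathbf{X})=\exp(\beta J(\mathbf{X}))$. The argument has two steps.

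\textbf{Self-normalized importance sampling.} By the weak law of large numbers, $\frac1M\sum_m w(\mathbf{X}^{(m)})f(\mathbf{X}^{(m)})\to Z_\beta\,\mathbb{E}_{q_\beta}[f]$ and $\frac1M\sum_m w(\mathbf{X}^{(m)})\to Z_\beta>0$ in probability. Both limits require only $\mathbb{E}_{p^{\mathrm{mut}}}[w]<\infty$ and boundedness of $f$. The continuous mapping theorem then gives $\sum_m\bar w_m f(\mathbf{X}^{(m)})\to\mathbb{E}_{q_\beta}[f]$ in probability.

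\textbf{Resampling.} Conditional on the pool $\mathcal{F}_M=\sigma(\mathbf{X}^{(1)},\dots,\mathbf{X}^{(M)})$, the variables $f(\widetilde{\mathbf{X}}^{(m)})$ are i.i.d. Their common mean is $\sum_m\bar w_m f(\mathbf{X}^{(m)})$, and their variance is at most $\|f\|_\infty^2$. Conditional Chebyshev therefore gives
\begin{equation*}
\mathbb{P}\Big(\Big|\tfrac1M\textstyle\sum_m f(\widetilde{\mathbf{X}}^{(m)})-\sum_m\bar w_m f(\mathbf{X}^{(m)})\Big|>\varepsilon\;\Big|\;\mathcal{F}_M\Big)\le \|f\|_\infty^2/(M\varepsilon^2).
\end{equation*}
Taking expectations removes the conditioning. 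Combining with the first step via the triangle inequality yields Eq.~\eqref{eq:resample_consistency}.
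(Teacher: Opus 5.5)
Your proposal is correct and follows the paper's proof essentially step for step. Part (i) rests on the same identity $\mathbb{E}_q[J]-\tfrac{1}{\beta}\mathrm{KL}(q\,\|\,p^{\mathrm{mut}})=\tfrac{1}{\beta}\log Z_\beta-\tfrac{1}{\beta}\mathrm{KL}(q\,\|\,q_\beta)$, and part (ii) combines the same self-normalized importance-sampling limit with the conditional variance bound $\|f\|_\infty^2/M$ for multinomial resampling. Your integrability caveat is a real refinement that the paper skips: it never checks $\mathrm{KL}(q_\beta\,\|\,p^{\mathrm{mut}})<\infty$, which fails exactly when $\mathbb{E}_{q_\beta}[J^+]=\infty$, and then truncation only shows the supremum is $\tfrac{1}{\beta}\log Z_\beta$ without attainment, so your extra hypothesis $Z_{\beta'}<\infty$ for some $\beta'>\beta$ is the clean way to close it.
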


\begin{proof}
We prove parts (i) and (ii) in order.

\paragraph{Part (i): KL-regularized selection yields the exponential tilt.}
For brevity, write $p^{\mathrm{mut}}:=p^{\mathrm{mut}}_{t_{\mathrm{int}}}$ and $p^{\mathrm{sel}}:=p^{\mathrm{sel}}_{t_{\mathrm{int}}}$, and fix $\beta>0$. Define $p^{\mathrm{sel}}$ by Eq.~\eqref{eq:q_beta_def}, so that for $p^{\mathrm{mut}}$-almost every $\mathbf{X}$,
\begin{equation}
\log\frac{p^{\mathrm{sel}}(\mathbf{X})}{p^{\mathrm{mut}}(\mathbf{X})}
= \beta J(\mathbf{X}) - \log Z_\beta.
\label{eq:log_q_beta}
\end{equation}
For any $q\ll p^{\mathrm{mut}}$ with $\mathrm{KL}(q\|p^{\mathrm{mut}})<\infty$, we expand the KL divergence to $p^{\mathrm{sel}}$:
\begin{align}
\mathrm{KL}(q\|p^{\mathrm{sel}})
&= \int  q(\mathbf{X})\log\frac{q(\mathbf{X})}{p^{\mathrm{sel}}(\mathbf{X})}\,d\mathbf{X}
= \int q(\mathbf{X})\log\frac{q(\mathbf{X})}{p^{\mathrm{mut}}(\mathbf{X})}\,d\mathbf{X} - \int q(\mathbf{X})\log\frac{p^{\mathrm{sel}}(\mathbf{X})}{p^{\mathrm{mut}}(\mathbf{X})}\,d\mathbf{X} \nonumber\\
&= \mathrm{KL}(q\|p^{\mathrm{mut}}) - \beta\,\mathbb{E}_{\mathbf{X}\sim q}\big[J(\mathbf{X})\big] + \log Z_\beta,
\label{eq:kl_decomposition}
\end{align}
where we used Eq.~\eqref{eq:log_q_beta}. Rearranging Eq.~\eqref{eq:kl_decomposition} gives
\begin{equation}
\mathbb{E}_{\mathbf{X}\sim q}[J (\mathbf{X})] - \tfrac{1}{\beta}\mathrm{KL}(q\|p^{\mathrm{mut}})
=
\tfrac{1}{\beta}\log Z_\beta - \tfrac{1}{\beta}\mathrm{KL}(q\|p^{\mathrm{sel}}).
\label{eq:objective_identity}
\end{equation}
Since $\mathrm{KL}(q\|p^{\mathrm{sel}})\ge 0$ with equality iff $q=p^{\mathrm{sel}}$, the objective in Eq.~\eqref{eq:kl_regularized_selection} is maximized uniquely at $p^{\mathrm{sel}}$.

\paragraph{Part (ii): resampling consistency.}
Let $\mathbf{X}^{(1)},\dots,\mathbf{X}^{(M)}\overset{\mathrm{i.i.d.}}{\sim}p^{\mathrm{mut}}$ and define weights $\bar w_m$ as in the statement. Conditional on the sampled particles, multinomial resampling satisfies
\begin{equation}
\mathbb{E}\!\left[\frac{1}{M}\sum_{m=1}^M f(\widetilde{\mathbf{X}}^{(m)}) \;\middle|\; \mathbf{X}^{(1:M)}\right]
= \sum_{m=1}^M \bar w_m f(\mathbf{X}^{(m)}),
\label{eq:resample_cond_mean}
\end{equation}
and
\begin{equation}
\mathrm{Var}\!\left(\frac{1}{M}\sum_{m=1}^M f(\widetilde{\mathbf{X}}^{(m)}) \;\middle|\; \mathbf{X}^{(1:M)}\right)
= \frac{1}{M}\,\mathrm{Var}_{I\sim\mathrm{Categorical}(\bar w)}\!\big(f(\mathbf{X}^{(I)})\big)
\le \frac{\|f\|_\infty^2}{M},
\label{eq:resample_cond_var}
\end{equation}
where $\|f\|_\infty:=\sup_{\mathbf{X}}|f(\mathbf{X})|<\infty$.

Next, write the weighted average in Eq.~\eqref{eq:resample_cond_mean} as a self-normalized importance sampling estimator:
\begin{equation}
\sum_{m=1}^M \bar w_m f(\mathbf{X}^{(m)})
=
\frac{\frac{1}{M}\sum_{m=1}^M f(\mathbf{X}^{(m)})\,e^{\beta J(\mathbf{X}^{(m)})}}
{\frac{1}{M}\sum_{m=1}^M e^{\beta J(\mathbf{X}^{(m)})}}.
\label{eq:snis_ratio}
\end{equation}
By the law of large numbers and the assumption $Z_\beta<\infty$, the numerator and denominator in Eq.~\eqref{eq:snis_ratio} converge in probability to
$\mathbb{E}_{p^{\mathrm{mut}}}[f(\mathbf{X})e^{\beta J(\mathbf{X})}]$ and $\mathbb{E}_{p^{\mathrm{mut}}}[e^{\beta J(\mathbf{X})}]=Z_\beta$, respectively, hence the ratio converges in probability to
\begin{equation}
\frac{\mathbb{E}_{p^{\mathrm{mut}}}[f(\mathbf{X})e^{\beta J(\mathbf{X})}]}{Z_\beta}
= \mathbb{E}_{p^{\mathrm{sel}}}[f(\mathbf{X})].
\label{eq:ratio_limit}
\end{equation}
Finally, Eq.~\eqref{eq:resample_cond_var} implies that the empirical average over resampled particles concentrates around its conditional mean at rate $1/\sqrt{M}$. Combining this concentration with the convergence of the conditional mean to Eq.~\eqref{eq:ratio_limit} yields Eq.~\eqref{eq:resample_consistency}.
\end{proof}

\section{Dataset and Preprocessing}
\label{app:dataset}

We train and evaluate on Pfam-A multiple sequence alignments (MSAs) \citep{mistry2021pfam} using the RP35 (representative proteome) filtered alignment set. Pfam families are curated protein-domain families represented by seed alignments and profile HMMs; full alignments are obtained by searching a sequence database with HMMER.
Pfam also distributes reduced-redundancy alignments based on representative proteome sets (RP15/RP35/RP55/RP75), which we use to control alignment size while retaining broad sequence diversity.

\paragraph{Raw source.}
The raw input is the Pfam-A RP35 full-alignment flatfile distributed by Pfam in Stockholm format, which contains one aligned block per family.
In the version used here, the raw file contains $15{,}952$ Pfam-A families and $12{,}937{,}286$ aligned sequences (as indicated by the Stockholm annotations).

\paragraph{Preprocessing workflow.}
We use a two-stage preprocessing pipeline to produce a cleaned per-family aligned FASTA corpus.
Let $\mathcal{A}$ denote the $20$ standard amino acids.
First, we convert the Stockholm file into per-family aligned FASTA files, preserving Pfam-provided alignment columns.
We map \texttt{.} and \texttt{-} to a gap symbol \texttt{-}, uppercase all letters, and map any non-standard residue to \texttt{X}.
Second, for each family we apply:
(i) a length filter that discards families with alignment length outside $[20,2000]$ columns;
(ii) a depth cap that uniformly subsamples to at most $5000$ sequences per family;
(iii) a column filter that drops alignment columns with gap/unknown fraction $>0.95$ (counting \texttt{-} and \texttt{X} as gap/unknown); and
(iv) a minimum-depth filter that discards families with fewer than $100$ remaining sequences.
The resulting aligned sequences are over the alphabet $\mathcal{A}\cup\{\texttt{-},\texttt{X}\}$, where \texttt{-} denotes alignment gaps and \texttt{X} denotes unknown/non-canonical residues.
In training and evaluation, we treat both \texttt{-} and \texttt{X} as missing data and mask them out; in generation, gap positions are kept fixed using a per-family gap mask (Sec.~\ref{sec:experiments}).

\paragraph{Processed dataset statistics.}
After preprocessing, we retain $8{,}886$ families and $8{,}942{,}518$ aligned sequences.
Family depths range from $100$ to $5000$ sequences (median $405$), and cleaned alignment lengths range from $20$ to $1486$ columns (median $159$).
\begin{figure*}[t]
\centering
\includegraphics[width=\textwidth]{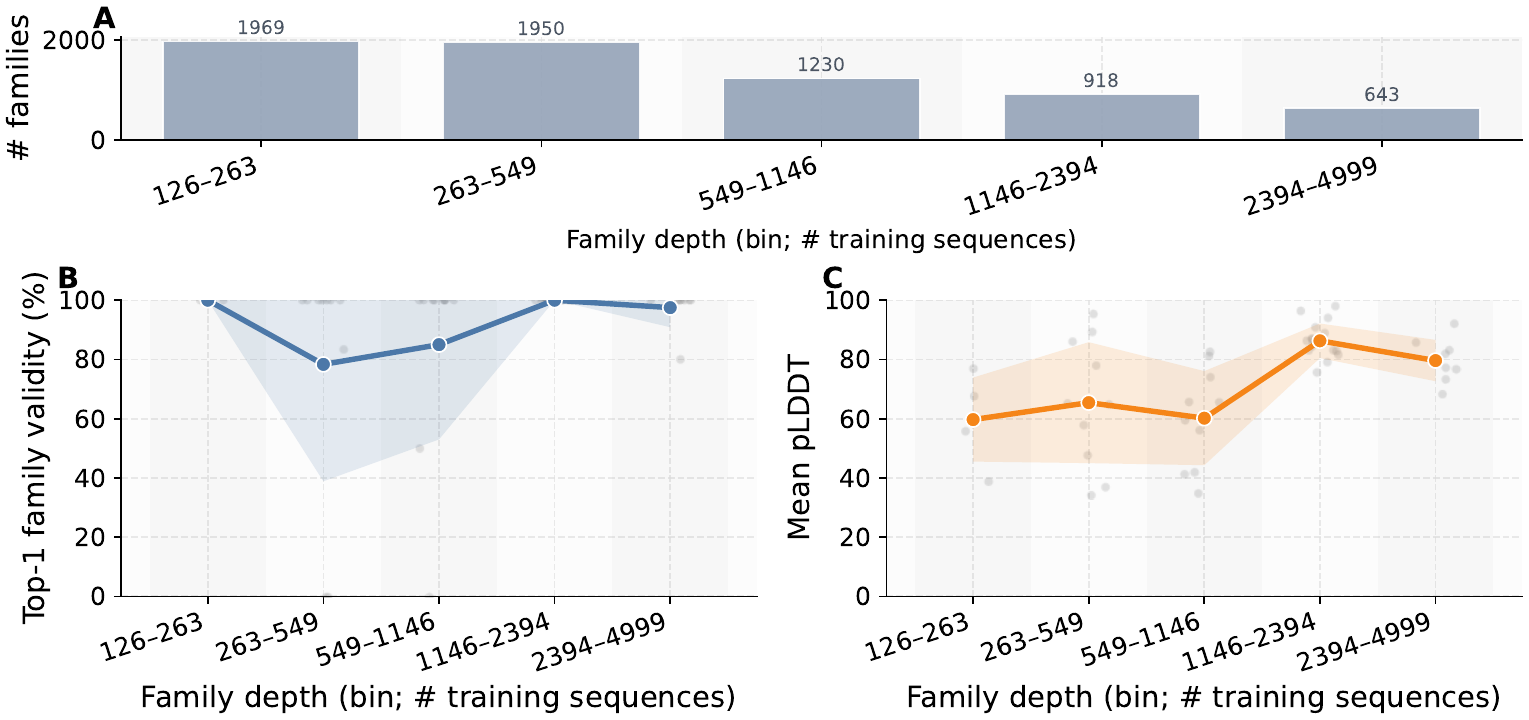}
\caption{\textbf{Family depth distribution and performance.}
\textbf{(A)} Depth distribution of the processed dataset (number of families per depth bin).
\textbf{(B--C)} LineageFlow performance versus family depth on the main benchmark: top-1 family validity and mean pLDDT (each dot is a family; shaded bands show mean$\pm$std within each bin).}
\label{fig:family_depth_distribution}
\end{figure*}

\paragraph{Train/test split.}
For each family, we perform a deterministic within-family split, holding out $5\%$ of sequences (at least $1$ and at most $N_h{-}1$) for testing and using the remaining sequences for training.

\subsection{ProtBFN pretraining data (UniProtCC)}
\label{app:protbfn_dataset}
ProtBFN~\citep{atkinson2025proteinbfn} is pretrained on a curated UniProtKB corpus rather than Pfam domain MSAs.
Specifically, it uses the January 2024 UniProtKB release filtered to high-confidence proteins (Protein Existence PE$\in\{1,2,3\}$, excluding hypothetical/unknown) and length $<512$, yielding 71M sequences (``UniProtCC'').
The official release provides model weights but not training code, so we cannot retrain ProtBFN on Pfam or add family conditioning; in Table~\ref{tab:experiments} we therefore report ProtBFN only as an unconditional reference baseline (foldability, self-consistency, within-set diversity, and novelty relative to UniProtCC), and omit family validity.

\subsection{PoET pretraining data (UniRef50 homology sets)}
\label{app:poet_dataset}
PoET~\citep{truong2023poet} is pretrained on large-scale sets of homologous sequences derived from UniRef50 rather than Pfam domain MSAs.
The PoET paper describes constructing homology sets anchored by UniRef50 representative sequences (e.g., UniRef50 2021\_03), retrieving homologs with DIAMOND, filtering out sets with fewer than 10 homologs, and sampling sets with inverse family-size weighting to reduce redundancy.
The official PoET release provides inference code and model weights, but does not release training code, the preprocessing pipeline, or the exact training corpus; we therefore evaluate PoET only by sampling from released weights.

\section{Construction of ASR-Based Priors from Pfam MSAs}
\label{app:prior-construction}

We construct a family-specific ancestral prior by combining (i) Pfam-provided alignment columns, (ii) maximum-likelihood phylogenetics, and (iii) ancestral sequence reconstruction (ASR) at the root.

\paragraph{Inputs and preprocessing.}
For each family $h$, Pfam provides an MSA of $N_h$ aligned sequences. Let $\mathcal{A}$ denote the $K{=}20$ standard amino acids, and let $A^{(h)}_{n,l}$ be the aligned character at row $n$ and column $l$. We treat gaps and unknown characters as missing data, i.e., $A^{(h)}_{n,l}\in\{\texttt{-},\texttt{X}\}$. We optionally cap the number of sequences used for cleaning (uniform subsampling to at most $5000$) and then drop highly missing columns using the per-column missing fraction
\[
m^{(h)}_l \;=\; \frac{1}{N_h}\sum_{n=1}^{N_h}\mathbf{1}\!\left\{A^{(h)}_{n,l}\in\{\texttt{-},\texttt{X}\}\right\},
\]
keeping columns with $m^{(h)}_l\le 0.95$, which defines the alignment length $L_h$.
For scalable phylogeny/ASR, we then filter out extremely gappy sequences (gap fraction $>0.90$) and very short sequences (ungapped length $<10$), and uniformly subsample at most $400$ sequences, yielding a filtered MSA
$\widetilde{A}^{(h)}\in(\mathcal{A}\cup\{\texttt{-},\texttt{X}\})^{\widetilde N_h\times L_h}$
with the same columns. Here $\widetilde N_h$ is the number of retained sequences, and we require $\widetilde N_h\ge 4$.

\paragraph{Tree inference and rooting.}
Using $\widetilde{A}^{(h)}$, we infer an unrooted maximum-likelihood phylogenetic tree $\widehat T_h^{\mathrm{unroot}}$ with branch lengths under an amino-acid substitution model (in our experiments we use \texttt{LG} \citep{legascuel2008lg} with discrete gamma rate heterogeneity, \texttt{LG+$\Gamma_4$}) using IQ-TREE \citep{nguyen2015iqtree}. Because standard reversible models identify only an unrooted tree, we choose a root using Minimal Ancestor Deviation (MAD) rooting \citep{tria2017mad} to obtain a rooted tree $\widehat T_h$ with root node $r_h$.

\paragraph{Root posterior (ASR).}
Let $Z^{(h)}_{v,l}\in\mathcal{A}$ be the latent residue at tree node $v$ and alignment column $l$. Fixing $(\widehat T_h,\text{model})$, we compute the marginal posterior at the root for each column:
\begin{equation}
p_{\mathrm{root}}^{(h,l)}(a)
\;=\;
\mathbb{P}\!\left(Z^{(h)}_{r_h,l}=a \,\middle|\, \widetilde{A}^{(h)},\,\widehat T_h,\,\text{model}\right),
\qquad a\in\mathcal{A},\ l=1,\dots,L_h.
\end{equation}
This is standard marginal ASR under a continuous-time Markov substitution model (computed via dynamic programming on the tree).
In our implementation, we compute these marginal root posteriors with PAML \citep{yang2007paml} given the rooted tree and fitted substitution model.

\paragraph{Dirichlet prior parameters.}
We map the root posterior to Dirichlet concentration parameters via a ``mean--concentration'' mapping:
\begin{equation}
\alpha^{(h,l)}_a \;=\; \varepsilon \;+\; \lambda\, p_{\mathrm{root}}^{(h,l)}(a),
\qquad a\in\mathcal{A}.
\label{eq:asr_to_dirichlet}
\end{equation}
This yields the site-wise prior $q_0(\mathbf{x}^{(l)}\mid h)=\mathrm{Dir}(\mathbf{x}^{(l)};\boldsymbol{\alpha}^{(h,l)})$ and the product-sequence prior in Eq.~\eqref{eq:prior_family_specific}. In our experiments we use $\lambda=10$ and $\varepsilon=10^{-3}$.

Optionally, we mix the ASR prior with a uniform prior of the same total concentration (which shifts the mean toward uniform while keeping the mass fixed):
\begin{equation}
\boldsymbol{\alpha}_{\mathrm{mix}}^{(h,l)}
\;=\;
(1-\rho)\,\boldsymbol{\alpha}^{(h,l)}
\;+\;
\rho\,\frac{\|\boldsymbol{\alpha}^{(h,l)}\|_1}{K}\,\mathbf{1},
\qquad
\rho\sim\mathrm{Unif}[0, \rho_{\max}].
\label{eq:prior_uniform_mix}
\end{equation}

\section{Evaluation Protocols}
\label{app:evaluation}

\subsection{Family Validity via Profile HMMs}
\label{app:family-eval}

In LineageFlow, generation is conditioned on a lineage (family) $h$ and evolves within the corresponding family-specific sequence space $\mathcal{X}_h$.
To validate that generated sequences are \emph{biologically consistent} with the intended family---i.e., similar to the training sequences of that family---we evaluate \emph{family membership} using profile hidden Markov models (profile HMMs) \citep{eddy1998profile} as implemented in HMMER \citep{eddy2011accelerated} and distributed by Pfam \citep{finn2014pfam}.

\paragraph{Profile-HMM family assignment.}
Let $\{\mathcal{M}_h\}_{h\in\mathcal{H}}$ denote the library of family profile HMMs and let $\mathrm{score}(\mathcal{M}_h,\mathbf{s})$ be the HMMER full-sequence bit score for sequence $\mathbf{s}$ under model $\mathcal{M}_h$.
For each generated sequence $\mathbf{s}_n$ with intended family label $h_n$ (the family used during generation), we compute its predicted family by scanning against the HMM library:
\begin{equation}
\widehat{h}(\mathbf{s}_n)
\;:=\;
\arg\max_{h\in\mathcal{H}} \mathrm{score}(\mathcal{M}_h,\mathbf{s}_n).
\label{eq:hmm_assign}
\end{equation}

\paragraph{Metrics.}
We report the top-1 family assignment accuracy
\begin{equation}
\mathrm{Acc}_{\mathrm{fam}}
\;=\;
\frac{1}{N}\sum_{n=1}^N \mathbf{1}\!\left\{\widehat{h}(\mathbf{s}_n)=h_n\right\},
\label{eq:fam_acc}
\end{equation}
and a hit-rate that counts any significant HMM match at threshold $\tau$ (Hit\_any), based on HMMER E-values:
\begin{equation}
\mathrm{Hit}_{\mathrm{any}}(\tau)
\;=\;
\frac{1}{N}\sum_{n=1}^N \mathbf{1}\!\left\{\min_{h\in\mathcal{H}} \mathrm{Evalue}(\mathcal{M}_{h},\mathbf{s}_n)\le \tau\right\}.
\label{eq:fam_hit}
\end{equation}
\paragraph{Interpretation.}
Higher $\mathrm{Acc}_{\mathrm{fam}}$ ($\uparrow$) indicates that generated sequences are most strongly recognized as belonging to the intended family rather than a related family.
Higher Hit\_any ($\uparrow$) indicates that a larger fraction of sequences achieve at least one statistically significant HMM match (at threshold $\tau$), independent of the intended family.

\subsection{Novelty via Nearest-Neighbor Similarity}
\label{app:novelty}

To assess whether generated sequences are genuinely new (rather than near-copies of training examples), we measure \emph{nearest-neighbor sequence similarity} to the training corpus using MMseqs2 \citep{steinegger2017mmseqs2}, following common practice in protein generative modeling \citep{alamdari2023evodiff}.
Because ``novelty'' can be trivially increased by generating low-quality sequences, we report novelty \emph{conditioned on foldability}: all novelty metrics are computed on the subset of sequences with mean pLDDT$\ge 70$ (Appendix~\ref{app:foldability_self_consistency}).
Let $\mathcal{R}_{\mathrm{all}}:=\bigcup_{h\in\mathcal{H}}\mathcal{R}_h$ denote the \emph{global} reference set consisting of all training sequences across families (after the same preprocessing and split used for training).

\paragraph{Nearest-neighbor identity.}
For each (foldable) generated amino-acid sequence $\mathbf{s}_n$, we find its best-matching reference sequence in $\mathcal{R}_{\mathrm{all}}$ and extract the reported pairwise sequence identity $\mathrm{Id}(\mathbf{s},\mathbf{r})\in[0,1]$ from the alignment.\footnote{We compute novelty on ungapped amino-acid sequences and require a minimum alignment coverage for both query and target (e.g., $80\%$) to avoid short high-identity local matches.}
We define the global nearest-neighbor identity as
\begin{equation}
\mathrm{NNId}(\mathbf{s}_n)
\;:=\;
\max_{\mathbf{r}\in\mathcal{R}_{\mathrm{all}}} \mathrm{Id}(\mathbf{s}_n,\mathbf{r}).
\label{eq:novelty_nnid}
\end{equation}

\paragraph{Summary metrics and interpretation.}
We report the mean$\pm$std of $\mathrm{NNId}$ and thresholded novelty rates, computed over sequences for which MMseqs2 returns at least one hit passing the coverage thresholds,
\begin{equation}
\mathrm{Novelty}@\delta
\;:=\;
\frac{1}{N}\sum_{n=1}^N \mathbf{1}\!\left\{\mathrm{NNId}(\mathbf{s}_n)<\delta\right\},
\label{eq:novelty_rate}
\end{equation}
for $\delta\in\{0.80,0.60\}$, computed over the foldable subset (pLDDT$\ge 70$); here $N$ denotes the number of foldable sequences with at least one MMseqs2 hit.
If a method yields no MMseqs2 hits in the foldable subset, $\mathrm{NNId}$ and $\mathrm{Novelty}@\delta$ are undefined; we mark this case as --- in Table~\ref{tab:experiments}.
Lower $\mathrm{NNId}$ ($\downarrow$) and higher $\mathrm{Novelty}@\delta$ ($\uparrow$) indicate greater novelty relative to training data; these should be interpreted together with family-validity (Appendix~\ref{app:family-eval}) to ensure samples remain within the intended family manifold.

\paragraph{Diversity via clustering.}
To summarize within-set diversity, we cluster the foldable subset (pLDDT$\ge 70$) using MMseqs2 \texttt{cluster} at 80\% sequence identity with minimum coverage 0.8 on ungapped amino-acid sequences.
We report $\mathrm{Diversity}@0.8$ as the number of clusters; higher values indicate more diverse generations.

\subsection{Foldability and Self-Consistency}
\label{app:foldability_self_consistency}

Following EvoDiff-style evaluation \citep{alamdari2023evodiff}, we report two complementary metrics: (i) \emph{foldability} via predicted structure confidence, and (ii) \emph{self-consistency} via inverse folding likelihood under the predicted backbone.

\paragraph{Foldability via OmegaFold pLDDT.}
For each generated amino-acid sequence $\mathbf{s}_n$ of length $L_n$, we predict a backbone structure $\widehat{\mathbf{Y}}_n$ using OmegaFold \citep{wu2022omegafold}.
OmegaFold outputs a per-residue confidence score (pLDDT) $\mathrm{pLDDT}_{n,j}\in[0,100]$ for residue $j$.
We summarize foldability by the mean pLDDT,
\begin{equation}
\mathrm{Fold}(\mathbf{s}_n)
\;:=\;
\frac{1}{L_n}\sum_{j=1}^{L_n}\mathrm{pLDDT}_{n,j}.
\label{eq:foldability_plddt}
\end{equation}
Higher values ($\uparrow$) indicate more confident predicted structures and are commonly used as a proxy for foldability.

\paragraph{Self-consistency via inverse folding perplexity.}
Given the predicted backbone $\widehat{\mathbf{Y}}_n$, we score the original sequence $\mathbf{s}_n$ under an inverse folding model $p_\phi(\mathbf{s}\mid \widehat{\mathbf{Y}})$ (we use ESM-IF \citep{hsu22inversefold}).
Let $p_\phi(s_{n,j}\mid \widehat{\mathbf{Y}}_n, s_{n,<j})$ be the autoregressive conditional probability assigned to residue $s_{n,j}$ at position $j$.
We compute the per-residue negative log-likelihood and the corresponding self-consistency perplexity:
\begin{equation}
\mathrm{NLL}_{\mathrm{sc}}(\mathbf{s}_n)
\;:=\;
-\frac{1}{L_n}\sum_{j=1}^{L_n}\log p_\phi\!\left(s_{n,j}\mid \widehat{\mathbf{Y}}_n, s_{n,<j}\right),
\qquad
\mathrm{scPPL}(\mathbf{s}_n)
\;:=\;
\exp\!\big(\mathrm{NLL}_{\mathrm{sc}}(\mathbf{s}_n)\big).
\label{eq:self_consistency_ppl}
\end{equation}
Lower values ($\downarrow$) indicate that the predicted backbone is more compatible with the original sequence under a structure-conditioned sequence model, closing the loop ``fold $\rightarrow$ score under inverse folding.''

\end{document}